\begin{document}

\newcommand{\prob}[1]{\text{\textsc{#1}}}
\newcommand{\blink}[1]{\textnormal{\texttt{#1}}}

\newcommand{\NP}[0]{\blink{NP}}
\newcommand{\NPI}[0]{\blink{NPI}}
\newcommand{\NPC}[0]{\blink{NPC}}
\newcommand{\coNP}[0]{\ensuremath{\text{co-}\mathcal{NP}}}
\newcommand{\coNPC}[0]{\ensuremath{\text{co-}\mathcal{NPC}}}
\renewcommand{\P}[0]{\blink{P}}
\newcommand{\BPP}[0]{\blink{BPP}}
\newcommand{\SimAn}[0]{\blink{Simulated Annealing}}
\newcommand{\Prob}[0]{\ensuremath{\mathbb{P}} }

\newtheorem{lem}{Lemma}

\newcommand{\R}{{\ensuremath{\mathbb{R}}}}
\newcommand{\N}{{\ensuremath{\mathbb{N}}}}
\newcommand{\Z}{{\ensuremath{\mathbb{Z}}}}
\newcommand{\C}{{\ensuremath{\mathbb{C}}}}
\newcommand{\Q}{{\ensuremath{\mathbb{Q}}}}
\newcommand{\F}{{\ensuremath{\mathbb{F}}}}
\newcommand{\Prim}{{\ensuremath{\mathbb{P}}}}
\newcommand{\X}{{\ensuremath{\mathbb{X}}}}

\newcommand{\Oh}{{\ensuremath{\mathcal{O}}}}


\renewcommand{\phi}{\varphi}
\renewcommand{\epsilon}{\varepsilon}

\newcommand{\fbcr}{\textsc{WRAC}} 
\newcommand{\fbcropt}{\textsc{Max}-\textsc{WRAC}} 
\newcommand{\fbcrhier}{\textsc{Hi}-\textsc{WRAC}} 
\newcommand{\fbcrarea}{\textsc{Area}-\textsc{WRAC}} 

\newcommand{\ggT}{\mathop{\mathrm{ggT}}}

\makeatletter
\def\imod#1{\allowbreak\mkern10mu({\operator@font mod}\,\,#1)}
\makeatother

\let\doendproof\endproof
\renewcommand\endproof{~\hfill\qed\doendproof}

\title{On Semantic Word Cloud Representation}

\author{
  Lukas Barth$^1$ \and
  Stephen Kobourov$^2$ \and
  Sergey Pupyrev$^2$ \and
  Torsten Ueckerdt$^3$
}

\institute{
$^1$Department of Informatics, Karlsruhe Institute of Technology\\
$^2$Department of Computer Science, University of Arizona\\
$^3$Department of Mathematics, Karlsruhe Institute of Technology
}
\pagenumbering{arabic}
\pagestyle{plain}
\maketitle

\begin{abstract}
We study the problem of computing semantic-preserving word clouds in which semantically related words are close to each other. While several heuristic approaches have been described in the literature, we formalize the underlying geometric algorithm problem: Word Rectangle Adjacency Contact (WRAC). In this model each word is a rectangle with fixed dimensions, and the goal is to represent semantically related word pairs by contacts between their corresponding rectangles. We design and analyze efficient polynomial-time algorithms for variants of the WRAC problem, show that some general variants are NP-hard, and describe several approximation algorithms.
Finally, we experimentally demonstrate that our theoretically-sound algorithms outperform the early heuristics.
\end{abstract}

\section{Introduction}\label{sec:intro}

Word clouds and tag clouds are popular tools for visualizing text. The practical tool, Wordle~\cite{wordle09} took word clouds to the next level with high quality design, graphics, style and functionality. Such word
cloud visualizations provide an appealing way to summarize the content
of a webpage, a research paper, or a political speech. Often such
visualizations are used to contrast two documents; for example, word cloud visualizations of the speeches given by the candidates in
the 2008 US Presidential elections were used to draw sharp contrast between them in the popular media.

While some of the more recent word cloud visualization tools aim to
incorporate semantics in the layout, none provide any guarantees about
the quality of the layout in terms of semantics. We propose a formal model of the problem, via a simple vertex-weighted and edge-weighted
graph. The vertices in the graph are the words in the document, with
weights corresponding to their frequency (or normalized
frequency). The edges in the graph correspond to semantic relatedness,
with weights corresponding to the strength of the relation. Each
vertex must be drawn as a rectangle or box with fixed dimensions and with area determined by its weight. The goal is to ``realize'' as many edges as possible, by contacts between their corresponding rectangles; see Fig.~\ref{fig:complexity-classes}.

\subsection{Related Work}

The early word-cloud approaches did not explicitly use semantic information, such as word relatedness, in placing the words in the cloud. More recent approaches attempt to do so. Koh {\em et al.}~\cite{maniwordle} use interaction to add semantic
relationship in their ManiWordle approach. Parallel tag
clouds by Collins {\em et al.}~\cite{collins-09} are used to visualize evolution over time with the help of
parallel coordinates.
Cui {\em et al.}~\cite{Cui_2010_wordcloud} couple trend charts with
word clouds to keep semantic relationships, while visualizing evolution over time with help of force-directed
methods.
Wu {\em et al.}~\cite{wu2011semantic} introduce a method for creating semantic-preserving word
clouds based on a seam-carving image processing method and an application of bubble sets.
Hierarchically clustered document collections are visualized with self-organizing maps~\cite{HKK96}
and Voronoi treemaps~\cite{brandes12}.

Note that the semantic-preserving word cloud problem is related to classic graph layout problems, where the goal is to draw graphs so that vertex labels are readable and Euclidean distances between pairs of vertices are proportional to the underlying graph distance between them. Typically, however, vertices are treated as points and label overlap removal is a post-processing step~\cite{dwyer05,gh10}.

In {\em rectangle representations} of graphs, vertices are axis-aligned rectangles with non-intersecting interiors and edges correspond rectangles with non-zero length common boundary. Every graph that can be represented this way is planar and every triangle in such a graph is a facial triangle. These two conditions are also sufficient to guarantee a rectangle representation~\cite{ungar,thomassen1986interval,rosenstiehl1986rectilinear,buchsbaum08,fusy2009transversal}. 
Rectangle representations play an important role in VLSI layout
and floor planning.
Several interesting problems arise when the rectangles in the representation are restricted. Eppstein \textit{et al.}~\cite{eppstein2012area} consider rectangle representations which can realize any given area-requirement or perimeter-requirement on the rectangles.
In a recent survery Felsner~\cite{felsner2013rectangle} reviews many rectangulation variants, including squarings.
 N\"ollenburg \textit{et al.}~\cite{nollenburg2013edge} consider rectangle representations of edge-weighted graphs, where edge weights are proportional to the lengths of the corresponding contact.

\begin{figure}[t]
 \centering
 \includegraphics[width=.7\textwidth]{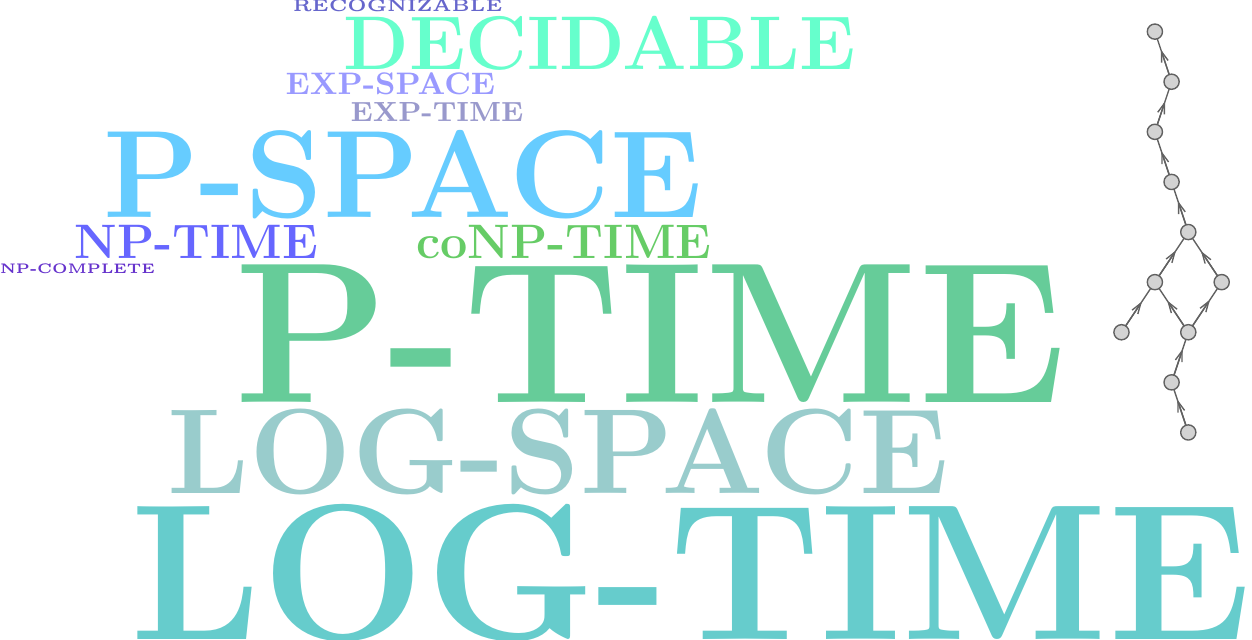}
 \caption{A hierarchical word cloud for complexity classes. A class is above another class when the first contains the second. The font size is the square root of millions of Google hits for the corresponding word. This is an example of the \emph{hierarchical \fbcr{} problem.}
}
 \label{fig:complexity-classes}
\end{figure}

\subsection{Our Contributions}

In the formal study the semantic word cloud problem we encounter several novel problems.
The input to all problems is a set of $n$ axis-aligned boxes $B_1,\ldots,B_n$ with fixed dimensions, e.g., box $B_i$ is encoded by $(w_i,h_i)$, where $w_i$ and $h_i$ its width and height.
Further, for every pair $\{i,j\}$, $i\neq j$, a non-negative \emph{profit} $p_{ij}$ represents the gain for making boxes $B_i$ and $B_j$ touch.
The set of non-zero
profits can be seen as the edge set of a graph whose vertices are the boxes, called the \emph{supporting graph}.

We define a \emph{representation of the boxes $B_1,\ldots,B_n$} to be the positions for each box in the plane, so that no two boxes overlap. 
A \emph{contact} between two boxes is a
common boundary. If two boxes are in contact, we say that these boxes \emph{touch}.
Finally, define the \emph{total profit of a representation} to be the sum of profits over all pairs of touching boxes. Next we summarize the results in this paper:

{\bf Word Rectangle Adjacency Contact (\fbcr{}):}
We are given $n$ boxes with fixed height and width each, and for each pair of boxes $B_i \neq B_j$ a profit $p_{ij}$, which is either $0$ or $1$. The task is to decide whether there exists a representation of the boxes with total profit $\sum_{i\neq j}p_{ij}$. This is equivalent to finding a representation whose induced contact graph contains the supporting graph as a subgraph. If such a representation exists, we say that it \emph{realizes the supporting graph} and that the instance of the \fbcr{} problem is \emph{realizable}.  We show that this problem is \NP{}-complete even if restricted to a tree as a supporting graph. We also show that the problem can be solved in linear time if the supporting graph is quasi-triangulated.

{\bf Hierarchical Word Rectangle Adjacency Contact (\fbcrhier{}):}
 This is a more restricted, yet useful, version of the \fbcr{} problem where the supporting graph is directed, planar, with a fixed embedding, and a unique sink. The task is to find a representation in which every contact is horizontal with the end-vertex of the corresponding directed edge on top; see Fig.~\ref{fig:complexity-classes}. We show how to solve this problem in polynomial time.

{\bf Maximum Word Rectangle Adjacency Contact (\fbcropt{}):}
 This is an optimization problem. The task is to find a representation of the given boxes, which maximizes the total profit. We show that the problem is weakly \NP{}-hard if the supporting graph is a star and  present several approximation algorithms for the problem: a constant-factor approximation for stars, trees, and planar graphs, and a $\frac{2}{\Delta+1}$-approximation for supporting graphs of maximum degree $\Delta$. We consider an extremal version of the \fbcropt{} problem and show that if the supporting graph $G=K_n$  ($n \geq 5$) and each profit is $1$, then there always exists a representation with total profit $2n-2$ and that this is sometimes best possible. Such a representation can be found in linear time.

{\bf Minimum Area Word Rectangle Adjacency Contact (\fbcrarea{}):}
 Given an instance of the \fbcr{} problem, which is already known to be realizable, find a representation that realizes the supporting graph and minimizes the area of the bounding box containing all boxes. We show that this problem is \NP{}-hard even if restricted to even simpler graphs as supporting graphs, namely independent sets, paths, or cycles.

\section{The \fbcr{} problem}\label{sec:realize}

\begin{theorem}\label{thm:trees:hardness}
  \fbcr{} is \NP{}-complete even if the supporting graph is a tree.
\end{theorem}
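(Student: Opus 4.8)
The plan is to establish membership in \NP{} first and then prove hardness by reduction.

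For membership, the only subtlety is that a representation a priori uses real coordinates, so I would argue that any realizable instance admits one of polynomial bit-complexity. As a certificate I would take the \emph{combinatorial type} of a representation: for every ordered pair of boxes a label recording whether one lies strictly left of, right of, above, or below the other, together with, for each edge of the supporting tree, the two sides carrying the required contact. This data has size $O(n^2)$. Given it, the existence of actual coordinates is a system of (possibly strict) linear inequalities in the $2n$ corner coordinates -- edge-equalities such as ``right side of $B_i$ equals left side of $B_j$'' for a realized vertical contact, plus separation inequalities encoding non-overlap -- so by standard linear-programming theory feasibility is checkable in polynomial time and, when feasible, is witnessed by rationals of polynomially bounded bit-length. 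Hence \fbcr{} lies in \NP{}.

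For hardness I would reduce from \textsc{3-Partition}: given positive integers $a_1,\dots,a_{3m}$ with $\sum_i a_i = mB$ and $B/4 < a_i < B/2$, I build a tree instance containing one \emph{item} box of width proportional to $a_i$ (and small height) for each $i$, a large \emph{container} box $C$ whose distinguished (say top) side has length exactly $mB$, and a collection of \emph{separator}/\emph{wall} boxes, all attached in the supporting tree. The separators and walls are given extreme aspect ratios (very tall and thin) and dimensions chosen so that the only non-overlapping way to realize all required contacts forces every item to rest on the top side of $C$ within one of $m$ intended width-$B$ slots. Since $B/4 < a_i < B/2$, a width-$B$ slot accommodates exactly three items, so that a realizing representation should exist precisely when the $a_i$ split into $m$ triples each summing to $B$.

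The forward direction -- turning a valid $3$-partition into a representation -- is the routine constructive part. The hard part will be the converse: proving that \emph{every} realizing representation actually places the items into the intended slots, with no cheating. Concretely I must rule out items overhanging a slot boundary or a corner of $C$, items attaching to an unintended side of $C$, items touching one another instead of $C$ so as to exceed a slot's capacity, and the vertical ``sliding'' freedom inherent in tree contacts. The central obstacle is that a tree of required contacts provides very little rigidity: only the tree edges are guaranteed to be realized, while all additional contacts are merely permitted, so the separator/wall gadget must be made geometrically self-locking through the non-overlap constraints alone rather than through any $2$-connected frame. Carrying out the case analysis that certifies this rigidity -- and thereby that slot capacities are respected -- is where I expect the real work of the proof to lie.
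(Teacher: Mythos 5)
Your \NP{}-membership argument is sound (indeed more detailed than what the paper records: the paper simply asserts verifiability, while you justify polynomial bit-complexity via an LP-feasibility certificate), and you chose the same reduction source, \prob{3-Partition}, with the same overall skeleton: item boxes of width $s_i$, a thin container box, and blocker/separator boxes all attached in a tree. Nevertheless there is a genuine gap, and it is precisely the step you yourself flag as ``where the real work of the proof will lie'': you never exhibit the gadget. For this theorem the soundness direction \emph{is} the theorem; a plan that postulates a separator/wall system which is ``geometrically self-locking'', without concrete dimensions and the accompanying case analysis, has not engaged with the central difficulty that a tree imposes almost no rigidity. Moreover, the one concrete design choice you do make---walls that are ``very tall and thin'', attached to the container---cannot work by itself: in a tree instance such a wall owes only the single contact of its tree edge, which it may realize anywhere along the container's boundary, so nothing pins it to a slot boundary and nothing prevents all items and walls from crowding one region.

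For comparison, the paper achieves rigidity with three devices, none of which appear in your sketch. First, side-blocking uses five \emph{huge squares}, each of side length equal to the container's width $mB+m-1$ and each adjacent to the container $c$ (a $1 \times (mB+m-1)$ box): any contact between such a square and $c$ forces the square to contain a corner of $c$, and five squares against four corners leave exactly one horizontal side of $c$ usable by the remaining boxes. Second, capacity is enforced by \emph{exact global accounting} rather than per-slot: the item widths ($\sum_i s_i = mB$) plus the $m-1$ separators of width $1$ sum to exactly the width of $c$, so once everything is forced onto the single free side it tiles that side with zero slack. Third, the slots are created by a ladder of auxiliary boxes hanging below the separators: separator $u_j$ has height $B+j$ and width $1$, carries a $B\times B$ square $w_j$ (with $u_{m-1}$ carrying two) that is too large to be placed except flanking it, and a thin box $x_j$ of width $jB+B+j$; a descending induction on $j=m-1,\dots,1$ shows each $x_j$ must be wedged between the structure built so far, forcing consecutive separators to be at horizontal distance at least $B$, after which the width accounting forces every distance to equal exactly $B$. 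You would need to invent gadgets playing these three roles (or an equivalent mechanism) and carry out that induction before your proposal constitutes a proof.
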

\begin{proof}
 It is easy to verify a solution of the \fbcr{} problem in polynomial time, so the problem is in \NP{}. To show that the problem is \NP{}-hard we use a reduction from \prob{3-Partition}, which is defined as follows. Given a multiset $S = \{s_1,s_2,\ldots,s_n\}$ of $n = 3m$ integers with $\sum_{i=1}^n s_i = mB$, is there a partition of $S$ into $m$ subsets $S_1,\ldots,S_m$ such that in each subset the numbers sum up to exactly $B$? This classical problem is known to be \NP{}-complete even if for every $i$ we have $B/4 < s_i < B/2$, in which case every subsets $S_j$ must contain exactly three elements. We also assume w.~l.~o.~g.
that $B > (m-1)/2$, which can be achieved by scaling all $s_i$ appropriately.

 Given an instance $S = \{s_1,s_2,\ldots,s_n\}$ of \prob{3-Partition}, $n = 3m$, $\sum_{i=1}^n s_i = mB$, we define a tree $T_S$ on $2n + 4$ vertices as follows. There is a vertex $v_i$ for $i=1,\ldots,n$, a vertex $w_j$ for $j=1,\ldots,m$, a vertex $u_j$ for $j=1,\ldots,m-1$, a vertex $x_j$ for $j=1,\ldots,m-1$, a vertex $c$, and five vertices $a_1,a_2,a_3,a_4,a_5$. Vertex $c$ is adjacent to all vertices except for $w_1,\ldots,w_m$ and $x_1,\ldots,x_{m-1}$. For $j = 1,\ldots,{m-1}$ vertex $u_j$ is adjacent to $w_j$ and $x_j$, and finally $u_{m-1}$ is adjacent to $w_m$; see Fig.~\ref{fig:tree:hardness}.

 \begin{figure}[t]
  \centering
  \subfloat{\includegraphics[width=.3\textwidth]{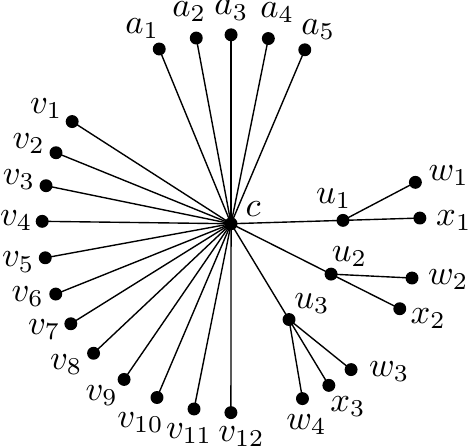}
  }
  \subfloat{\includegraphics[width=.7\textwidth]{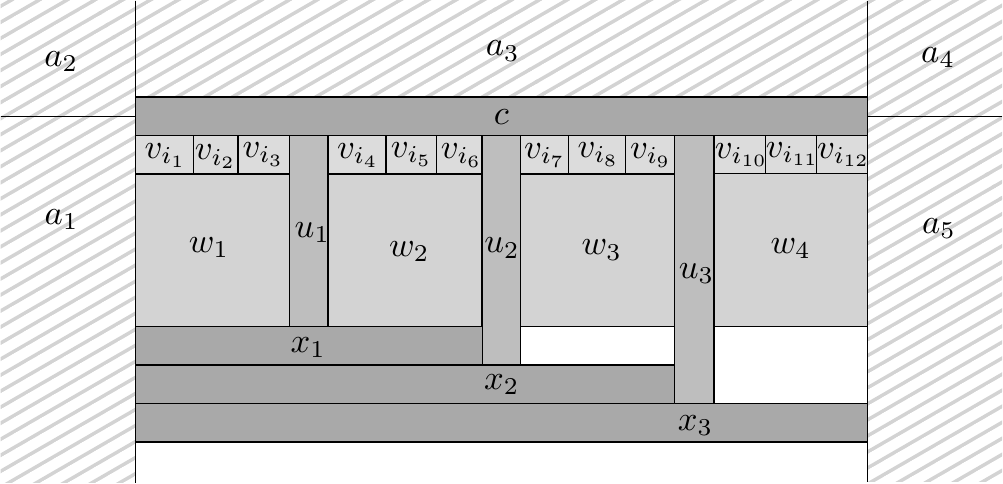}
  }
  \caption{The tree $T_S$, created from a set $S$ of $n$ integers and a representation realizing $T_S$.}
  \label{fig:tree:hardness}
 \end{figure}

 For each vertex we define a box by specifying its height and width. For simplicity let us write $v \to (h,w)$ to say that the box for $v$ has height $h$ and width $w$. Using this notation we define $u_j \to (B+j,1)$ for $j=1,\ldots,{m-1}$, $w_j \to (B,B)$ for $j=1,\ldots,m$, $x_j \to (1,iB+B+i)$ for $j=1,\ldots,m-1$, $v_i \to (1,s_i)$ for $i=1,\ldots,n$, $c \to (1,mB+m-1)$, and $a_k \to (mB+m-1,mB+m-1)$ for $k=1,2,3,4,5$.

 We claim that an instance $S$ of \prob{3-Partition} is feasible if and only if the instance of \fbcr{} defined above is feasible. To this end, consider any representation that realizes $T_S$. We refer to Fig.~\ref{fig:tree:hardness} for an illustration. We abuse notation and refer to the box for a vertex $v$ also as $v$. The box $c$ has height $1$ and width $mB+m-1$. Since $c$ touches the five $mB+m-1 \times mB+m-1$ squares $a_1$, $a_2$, $a_3$, $a_4$ and $a_5$, each $a_k$ contains a corner of $c$. It follows that at least three sides of $c$ are partially covered by some $a_k$ and at least one horizontal side of $c$ is completely covered by some $a_k$. Because $c$ has height $1$ only, but touches the boxes $v_1,\ldots,v_n,u_1,\ldots,u_{m-1}$ (each of height at least $1$), all these boxes touch $c$ on its free horizontal side, say the bottom. Indeed the widths of $v_1,\ldots,v_n,u_1,\ldots,u_{m-1}$ sum exactly to the width of $c$.

 Now $u_{m-1}$ touches $x_{m-1}$ whose width is also $mB+m-1$. Since $u_{m-1}$ has height $B+m-1 < mB+m-1$, the top of $x_{m-1}$ touches the bottom of $u_{m-1}$ and the left and right of $x_{m-1}$ touch some $a_i$ each. Since $u_{m-1}$ also touches the $B \times B$ squares $w_{m-1}$ and $w_m$ and $u_{m-1}$ has height $B+m-1 < 2B$, there is one square on each side of $u_{m-1}$. Then t $u_{m-1}$ and the rightmost $a_k$ are at horizontal distance of at least $B$.

 The height of $u_{m-2}$ is by one less than the height of $u_{m-1}$. Moreover, $u_{m-2}$ touches $x_{m-2}$ whose width is by $B + 1$ less than the width of $x_{m-1}$. This forces $x_{m-2}$ to touch some $a_k$ on the left, $u_{m-1}$ on the right and $u_{m-2}$ on top. Moreover, $u_{m-2}$ has $w_{m-2}$ on its left side. It follows that $u_{m-2}$ and $u_{m-1}$ have a horizontal distance of at least $B$.

 Similarly, for all $i=m-1,\ldots,2$ the boxes $u_i$ and $u_{i-1}$, as well as the box $u_1$ and the leftmost box $a_k$, have a horizontal distance of at least $B$. Now the width of $c$ being $mB+m-1$ forces all these distances to be \emph{exactly} $B$. Thus the boxes $v_1,\ldots,v_n$ are partitioned into $m$ subsets corresponding to the $m$ spaces between the leftmost $a_k$, all the $u_j$, and the rightmost $a_k$. Since $v_i$ has width $s_i$, $i=1\ldots,n$, in each subset the numbers sum up to exactly $B$.

 Along the same lines one can easily construct a representation realizing $T_S$ based on any given solution of the \prob{3-Partition} instance $S$.
This concludes the proof.
\end{proof}


By Theorem~\ref{thm:trees:hardness} the \fbcr{} problem is \NP{}-hard if the supporting graph is tree, and thus it is \NP{}-hard in general. However, there are classes of supporting graphs for which the problem can be solved efficiently.


A rectangle representation is called a \emph{rectangular dual} if the union of all rectangles is again a rectangle whose boundary is formed by exactly four rectangles. A graph $G$ admits a rectangular dual if and only if $G$ is planar, internally triangulated, has a quadrangular outer face and does not contain separating triangles~\cite{buchsbaum08}. Call such graphs \emph{quasi-triangulated}. The four outer vertices of a quasi-triangulated graph are denoted by $v_N$, $v_E$, $v_S$, $v_W$ 
in clockwise order around the outer quadrangle. A quasi-triangulated graph $G$ may have exponentially many rectangular duals. However, every rectangular dual of $G$ can be built up by placing one rectangle at a time, always keeping the union of placed rectangle in staircase shape.


\begin{theorem}\label{thm:quasi-triangulated}
 \fbcr{} can be solved in linear time for quasi-triangulated support graphs.
\end{theorem}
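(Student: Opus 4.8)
The plan is to prove the theorem constructively, exploiting the fact stated just above that a quasi-triangulated graph $G$ admits a rectangular dual which can be assembled one rectangle at a time while keeping the filled region in staircase shape. First I would compute, in linear time, a rectangular dual of $G$ together with the left/right/above/below relation it induces on the vertices; this is a purely combinatorial object and known algorithms (see~\cite{buchsbaum08}) produce it within the desired time bound. This relation tells us, for every vertex, which of its neighbours are meant to lie on each of its four sides, and it yields a \emph{staircase order} $v_1,\dots,v_n$ such that for each $k$ the neighbours of $v_k$ among $v_1,\dots,v_{k-1}$ occupy a contiguous portion of the staircase boundary of the region filled by the first $k-1$ rectangles.

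Second, I would place the actual fixed-size boxes one at a time in this order, maintaining the invariant that the union of the already-placed boxes is bounded from the upper right by a monotone staircase and that every edge of $G$ between two already-placed vertices has been realized as a contact. When box $v_k$ is inserted, I place its lower-left corner at the reflex corner determined by its predecessors, letting its left side run upward against the vertical frontier segments and its bottom side run rightward against the horizontal frontier segments, so that it meets each predecessor on its bottom or on its left. Since extra contacts are harmless --- we only need the contact graph to \emph{contain} $G$ --- the only obligations are that $v_k$ avoid overlapping any placed box and genuinely touch each of its predecessors.

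The heart of the argument, and the step I expect to be the main obstacle, is showing that this local insertion always succeeds \emph{for the prescribed fixed dimensions}. In a genuine rectangular dual the new rectangle must exactly cap the staircase of its predecessors, rigidly coupling its width and height to theirs; with fixed dimensions such an exact fit is impossible in general. The observation that rescues the construction is that a \fbcr{} contact only requires a boundary overlap of positive length, so the representation need not tile a rectangle and we may leave gaps. This slack lets me adjust the positions of the predecessors --- and, where the prescribed dimensions would otherwise force a width/height conflict, the horizontal-versus-vertical role under which an edge is realized --- so that their exposed frontier fits within the $L$-shaped reach of $v_k$'s bottom and left sides. I would prove by induction on $k$ that the staircase invariant can be maintained while absorbing a box $v_k$ of arbitrary fixed size; the base case is a single box and each inductive step is exactly the local placement just described. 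This in particular shows that every quasi-triangulated instance is realizable.

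Finally, I would account for the running time. Computing the rectangular dual and the staircase order is linear, and inserting $v_k$ only inspects its predecessors, i.e.\ a contiguous piece of the current frontier of size proportional to $\deg(v_k)$. As $\sum_k \deg(v_k) = 2|E(G)| = O(n)$ for the planar graph $G$, all placements together cost $O(n)$, giving an overall linear-time algorithm that outputs a representation realizing $G$.
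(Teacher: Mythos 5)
Your plan shares the paper's skeleton---insert one box at a time, in an order derived from the combinatorial structure of $G$, keeping the union of placed boxes in staircase shape---but the step you yourself single out as the heart of the argument is not merely the main obstacle: it is false. You claim that the slack afforded by non-tiling representations (gaps are allowed, extra contacts are harmless) lets the insertion of a box of \emph{arbitrary} fixed dimensions always succeed, and you conclude that every quasi-triangulated instance is realizable. It is not, and this is exactly why \fbcr{} is a decision problem here. Concretely, take a wheel with center $u$ and sufficiently many rim vertices $w_1,\dots,w_k$ (say $k=25$; every triangle of a wheel is a face, so there are no separating triangles), completed to a quasi-triangulated graph by adding the outer quadrangle and triangulating the annulus between rim and quadrangle. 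Let the box for $u$ be a $1\times 1$ square and every other box a $100\times 100$ square. Any representation realizing $G$ must make all $k$ rim squares touch $u$. But a huge square touching $u$ with positive-length contact must have a side segment on the line through one of $u$'s four unit-length sides, and three pairwise interior-disjoint intervals of length $100$ cannot all meet a unit interval, so each side of $u$ accommodates at most two such squares---at most eight in total (and even if one counts degenerate corner-point contacts, the number of interior-disjoint boxes meeting $u$'s boundary is bounded by a small constant). Hence the instance is infeasible, and no repositioning of predecessors or re-deciding of horizontal-versus-vertical contact roles can circumvent this counting obstruction; the induction you propose cannot be carried out.

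The consequence is that your algorithm, as designed, can never answer ``no,'' and so it cannot solve the problem. The paper's algorithm looks superficially similar (greedy staircase insertion into concavities), but its correctness rests on the opposite observation: the placements are essentially \emph{forced}---a box that must touch the two rectangles meeting at a concavity must have its bottom-left corner exactly at that concavity---so when the greedy insertion gets stuck, or the final width/height checks against $v_N,v_E,v_S,v_W$ fail, this constitutes a proof that no representation realizing $G$ exists. Establishing that rigidity (failure of the greedy certifies infeasibility) is the real content of the theorem, and it is precisely the part your proposal replaces with an untrue universal-feasibility claim; the flexibility you invoke (gaps, repositioning, reorienting contacts) is what the paper's argument must and does rule out.
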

\begin{proof}[Sketch]
 The algorithm greedily builds up the quasi-planar supporting graph $G$. Start with a vertical and a horizontal ray emerging from the same point $p$, as placeholders for the right side of $v_W$ and the top side of $v_S$, respectively. Then at each step consider a \emph{concavity} -- a point on the boundary of the so far constructed representation which is a bottom-right or top-left corner of some rectangle -- with $p$ as the initial concavity. Since each concavity $p$ is contained in exactly two rectangles, there exists a unique rectangle $R_p$ that is yet to be placed and has to touch both these rectangles. If by adding $R_p$ we still have as staircase shape representation, then we do so. If no such rectangle can be added, we conclude that $G$ is not realizable. See Fig.~\ref{fig:quasi-triangulated} for an illustration; the complete proof is in the Appendix.
 \begin{figure}[t]
  \centering
  \subfloat{\includegraphics{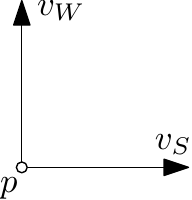}
  }\hspace{2em}
  \subfloat{\includegraphics{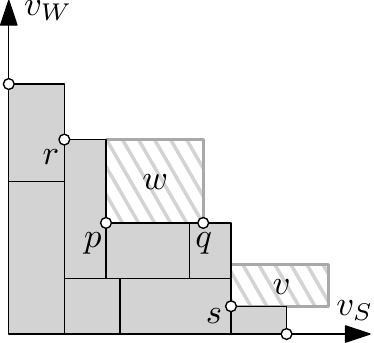}
  }
  \hspace{2em}
  \subfloat{\includegraphics{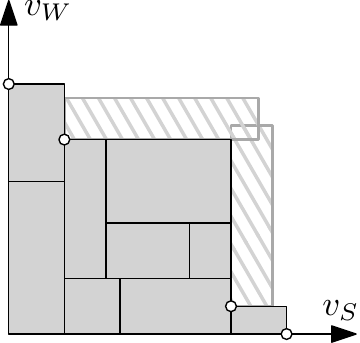}
  }
  \caption{Left: starting configuration with rays $v_S$ and $v_W$.
Center: representation at an intermediate step: vertex $w$ fits into concavity $p$ and results in a staircase, vertex $v$ fits into concavity $s$ but does not result in staircase. Adding box $w$ to the representation introduces new concavity $q$, and the vertex at concavity $r$ may be applicable. Right: there is no applicable vertex and the algorithm terminates.}
  \label{fig:quasi-triangulated}
 \end{figure}
\end{proof}

\section{The \fbcrhier{} problem}\label{sec:hierarchy}


The \fbcrhier{} problem is a more restricted variant of the \fbcr{} problem, but it can be used in practice to produce word clouds with a hierarchical structure; see Fig.~\ref{fig:complexity-classes}.
In this setting the input is a plane embedded graph $G$ with an acyclic orientation of its edges such that only one vertex has no outgoing edges, called a \emph{sink}. The task is to find a representation that \emph{hierarchically realizes $G$}, that is, it induces $G$ with its embedding as a contact graph and for every directed edge $v \to w$ in $G$ the box for $v$ touches the box for $w$ with its top side. In particular, every contact is horizontal and going along directed edges in the graph corresponds to ``going up'' in the representation.

If the embedding of $G$ is not fixed, it is easy to adapt the proof of Theorem~\ref{thm:trees:hardness} to show that the problem is again \NP{}-complete, already for trees. Indeed, one simply has to remove the vertices $a_k$, $k=1,2,3,4,5$, and orient the remaining edges of $T_S$ according to the representation shown in Fig.~\ref{fig:tree:hardness}. However, if we fix the embedding of the supporting graph $G$ and there is exactly one sink, then the \fbcrhier{} problem is polynomial-time solvable.

\begin{theorem}\label{thm:hierarchical-planar}
 The \fbcrhier{} problem can be solved in polynomial time.
\end{theorem}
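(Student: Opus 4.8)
The plan is to fix the vertical coordinates first, since they are essentially forced, and then to reduce the remaining horizontal placement to a polynomially solvable system of linear constraints extracted from the fixed embedding. Throughout I write $y_v$ for the bottom coordinate and $x_v$ for the left coordinate of the box for $v$, and I use that a single sink together with an acyclic orientation forces the graph to be connected (each component of an acyclic orientation has its own sink).

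First I would observe that for every directed edge $v \to w$ the requirement that the top side of $v$ touch the bottom side of $w$ forces $y_v + h_v = y_w$. Hence the $y$-coordinates are determined up to a global translation: process the vertices in reverse topological order, starting from the unique sink. Since every non-sink vertex has an outgoing edge, its bottom coordinate is fixed by any already-placed out-neighbour via $y_v = y_w - h_v$; whenever a vertex has several out-neighbours we must check that they all received the same bottom coordinate (equivalently, that the bottoms of all out-neighbours of a vertex align), declaring the instance infeasible otherwise. This step runs in linear time and reduces the problem to placing, for each box $v$, a horizontal interval $[x_v, x_v + w_v]$ at a now-fixed height.

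Next I would translate the embedding into horizontal constraints. For an edge $v \to w$ the two boxes share the horizontal line $y_w = y_v + h_v$, so a positive-length contact is expressed by the two strict inequalities $x_v < x_w + w_w$ and $x_w < x_v + w_v$; moreover the rotation at $w$ (resp.\ at $v$) prescribes the left-to-right order of the in-neighbours (resp.\ out-neighbours), which adds a linear order among the corresponding intervals. For every pair of boxes whose \emph{closed} vertical extents intersect but which are non-adjacent, a spurious overlap or contact must be forbidden by a strict separation $x_a + w_a < x_b$; the direction ``$a$ left of $b$'' is read off from the fixed planar embedding, exploiting that an acyclic orientation with a single sink yields a bimodal embedding and hence a well-defined left-to-right partial order on the boxes, analogous to the two orders associated with a planar $st$-graph. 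Collecting all these (polynomially many) inequalities yields a linear feasibility program in the variables $x_v$; strictness is handled in the standard way by maximizing a common margin $\epsilon$ and testing whether the optimum is positive. This program is solved in polynomial time by linear programming, or, since for any fixed margin every inequality is a difference of two variables, by negative-cycle detection in the associated constraint digraph.

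Finally I would prove the two directions of correctness. Necessity is immediate: any hierarchical representation of $G$ yields coordinates satisfying all the above constraints. The substantive direction is sufficiency: I must show that a feasible assignment of the $x_v$ together with the forced $y_v$ produces a representation whose induced \emph{embedded} contact graph is exactly $G$ --- every prescribed contact is realized with positive length, no two boxes overlap, no unprescribed contact appears, and the cyclic order around each box matches the given rotation system. I expect this to be the main obstacle, because it requires a structural completeness lemma: that controlling the relative order of every vertically overlapping or vertically touching pair, in the directions dictated by the embedding, is enough to rule out all planarity violations and all spurious adjacencies, and that the constraint digraph obtained from the embedding is acyclic so that feasibility genuinely reflects realizability. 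Once this lemma is in place, the polynomial running time follows from the construction above, completing the proof.
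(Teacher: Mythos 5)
Your proposal follows the same two-stage skeleton as the paper's proof: your first step (propagating $y$-coordinates in reverse topological order from the unique sink and checking consistency at vertices with several out-neighbours) is exactly the paper's Phase~2, and your second step (a polynomial-size system of difference constraints on the $x$-coordinates, solvable by LP or negative-cycle detection) is the paper's Phase~3. The genuine gap is precisely the piece you flag as a ``structural completeness lemma'' and then leave unproved: you never say how the relation ``$a$ lies to the left of $b$'' is actually extracted from the embedding, and without it the constraint system is not even defined. This is the one nontrivial construction the paper supplies: for each box $B_i$ take the \emph{leftmost} and \emph{rightmost} directed paths from $B_i$ to the unique sink $B_1$ (well defined because the embedding is bimodal --- the paper's Phase~1 check, which you mention only in passing), and declare $B_i$ to the left of $B_j$ if the leftmost path of $B_i$ joins the leftmost path of $B_j$ from the left, symmetrically for ``right of'' using rightmost paths. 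These relations are computable in linear time, and every non-adjacent pair whose $y$-extents overlap interiorly is comparable in at least one of them, which is what makes feasibility of the linear system equivalent to realizability. Your appeal to planar $st$-graph theory does not substitute for this: an $st$-graph has a single source \emph{and} a single sink, whereas here only the sink is unique, so the standard left/right dual order is not directly available, and inserting an artificial super-source would itself require justifying that planarity and bimodality of the embedding can be preserved.

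A secondary deviation worth noting: you impose \emph{strict} inequalities (positive-length contacts, and strict separation of every non-adjacent pair whose closed vertical extents meet), whereas the paper uses non-strict inequalities throughout and only constrains pairs whose vertical extents intersect \emph{interiorly}; the paper's solutions may therefore contain degenerate point contacts or spurious contacts. The two formulations genuinely differ: one can build instances (e.g., several children of a common parent whose interior widths sum exactly to the parent's width) that are feasible for the paper's non-strict system but infeasible for your strict one, so which algorithm is ``correct'' depends on whether one reads \fbcrhier{} as requiring the contact graph to equal $G$ with non-degenerate contacts or merely to realize every edge of $G$. Your margin-maximization trick for handling strictness is fine in itself, but you should state explicitly which variant you are solving, since the reduction's correctness claim changes accordingly.
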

\begin{proof}
 Let $G$ be the given supporting graph, i.e., a directed embedded planar graph with vertex set of boxes $\mathcal{B} = \{B_1,\ldots,B_n\}$. Let $h_i$ and $w_i$ be the height and width of box $B_i$, $i=1,\ldots,n$, and $B_1$ be the unique sink. Our algorithm consists of three phases.

{\bf Phase 1:} Here we check whether the orientation and embedding of $G$ are compatible with each other. Indeed the orientation of $G$ must be acyclic, and going clockwise around every vertex the incident edges must come as a (possibly empty) set of incoming edges followed by a (possibly empty) set of outgoing edges. If one of the two properties fails, then $G$ can not be hierarchically realized and the algorithm stops.

{\bf Phase 2:} Here we check whether the given heights of boxes are compatible with the orientation of $G$. More precisely, we set for each box $B_i$ two numbers $\mathrm{low}_i$ and $\mathrm{high}_i$, which correspond to the $y$-coordinate of the bottom and top side of $B_i$, respectively. In particular, we set $\mathrm{low}_1 = 0$, for every $i = 1,\ldots,n$ we set $\mathrm{high}_i = \mathrm{low}_i + h_i$, and for every edge $B_i \to B_j$ we set $\mathrm{high}_i = \mathrm{low}_j$. This can be done with one iteration of breadth-first search of $G$. If one number would have to be set to two different values, then $G$ can not be hierarchically realized and the algorithm stops.

{\bf Phase 3:} Here we check whether the given widths of boxes are compatible with the orientation and embedding of $G$ and compute a representation hierarchically realizing $G$, if it exists. Since we already know the $y$-coordinates for each box it suffices to compute a valid assignment of $x$-coordinates. To avoid overlaps, any two boxes whose $y$-coordinates intersect interiorly must have interiorly disjoint $x$-coordinates. Since $G$ has a unique sink we can determine which of the two boxes lies to the left and which to the right: consider for every box $B_i$ the leftmost and rightmost directed path from $B_i$ to $B_1$ and say that $B_i$ \emph{lies to the left of} $B_j$ if the leftmost path of $B_i$ joins the leftmost path of $B_j$ from the left. Similarly, $B_i$ \emph{lies to the right of} $B_j$ if the rightmost path of $B_i$ joins the rightmost path of $B_j$ from the right. Note that if $B_i$ lies to the left of $B_j$ then $B_j$ does not
lie to the left of $B_i$, but $B_i$ may also lie to the right of $B_j$. 
 More precisely, we introduce for each box $B_i$ two variables $\mathrm{left}_i$ and $\mathrm{right}_i$, which correspond to the $x$-coordinate of the left and right side of $B_i$, respectively. We consider the equations
 \begin{equation}\label{eq:hierarchical-planar-eq}
  \mathrm{right}_i = \mathrm{left}_i + w_i \hspace{5em} \text{for }i = 1,\ldots,n
 \end{equation}
 which ensure that each box $B_i$ has width $w_i$. When the $y$-coordinates of $B_i$ and $B_j$ intersect interiorly, i.e., if $\max\{\mathrm{low}_i,\mathrm{low}_j\} < \min\{\mathrm{high}_i, \mathrm{high}_j\}$, we have inequalities
 \begin{eqnarray}
  \mathrm{right}_i &\leq& \mathrm{left}_j \hspace{5.6em} \text{for $B_i$ to the left of $B_j$, and}\label{eq:hierarchical-planar-ineq1}\\
  \mathrm{left}_i &\geq& \mathrm{right}_j \hspace{5em} \text{for $B_i$ to the right of $B_j$}\label{eq:hierarchical-planar-ineq2}
 \end{eqnarray}
 which ensure that $B_i$ and $B_j$ do not intersect interiorly. Finally, for every directed edge $B_i \to B_j$ we consider the inequalities
 \begin{eqnarray}
  \mathrm{right}_i &\geq& \mathrm{left}_j \hspace{5em} \text{and} \label{eq:hierarchical-planar-edge1}\\
  \mathrm{left}_i &\leq& \mathrm{right}_j. \label{eq:hierarchical-planar-edge2}
 \end{eqnarray}
 which ensure that boxes $B_i$ and $B_j$ touch. It is easy to verify that the solutions of the system of linear equations~\eqref{eq:hierarchical-planar-eq} and inequalities~\eqref{eq:hierarchical-planar-ineq1}--\eqref{eq:hierarchical-planar-edge2} on variables $\mathrm{left}_i$ and $\mathrm{right}_i$ correspond 
to representations hierarchically realizing $G$. Thus if a solution is found, the algorithm defines a representation by placing box $B_i$ with its bottom-left corner onto the point $(\mathrm{left}_i,\mathrm{low}_i)$, $i =1,\ldots,n$. If no solution exists, then $G$ can not be hierarchically realized and the algorithm stops.

 The first two phases can be easily carried out in linear time. In the third phase, finding all leftmost and rightmost paths and deciding for every pair $B_i$, $B_j$ whether $B_i$ lies left or right of $B_j$, can also be done in linear time. Setting up the equations and inequalities takes at most quadratic time since there are $\mathcal{O}(n^2)$ inequalities.
The rest boils down to linear programming, and hence, in polynomial time. (A feasible solution can be found faster than with LP, but we leave the details out of this paper.)
\end{proof}

\section{The \fbcropt{} problem}\label{sec:optimize}
We begin by showing that \fbcropt{} is \NP{}-hard, even for simple supporting graphs. Since this version of the problem is particularly relevant in practice, we also present approximation algorithms for several different classes of supporting graphs.

\subsection{NP-hardness}

\begin{theorem}\label{thm:star-hardness}
\fbcropt{} is (weakly) \NP{}-hard if the supporting graph is a star.
\end{theorem}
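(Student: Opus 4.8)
The plan is to reduce from \prob{Partition}, which is (weakly) \NP{}-complete: given positive integers $a_1,\dots,a_n$ with $\sum_i a_i = 2T$, decide whether some index subset sums to exactly $T$. Membership in \NP{} (for the decision version ``is the optimum at least $k$?'') is immediate, since a representation can be guessed and verified in polynomial time, so I concentrate on the hardness construction. The whole reduction will be pseudo-polynomial in the $a_i$, matching the claim of \emph{weak} hardness.

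For the gadget I build a star whose centre $C$ is a long flat box of width $T$ and height $1$, and I attach one leaf $R_i$ of width $a_i$, height $1$, and profit $p_i = a_i$ for each integer. The intended behaviour is that leaves touching $C$ along a single horizontal side have pairwise interior-disjoint contact intervals, so their widths sum to at most $T$; placing the two halves of a balanced partition on the top and on the bottom of $C$ then realises \emph{all} leaves and achieves total profit $2T$. This forward direction is routine. Conversely, I want that the best achievable leaf profit is $\max\{\sum_{i\in S} a_i : S \text{ fits}\}$, which reaches $2T$ only when both the top slot and the bottom slot are filled exactly to $T$, i.e.\ exactly when a \prob{Partition} solution exists.

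The genuine difficulty is that a star offers much more room than two length-$T$ slots: a leaf may attach to any of the four sides of $C$, and, crucially, the leftmost and rightmost leaf on each side may \emph{overhang} past a corner while still keeping a positive-length contact. Hence a constant number of leaves could attach ``for free'' irrespective of their widths, and the four sides together with these overhangs could host all leaves with no balanced split, collapsing the reduction. To rule this out I introduce two auxiliary \emph{blocker} leaves of overwhelming profit $P \gg 2T$, one to the left and one to the right of $C$, each very tall so that it occupies its side of $C$ entirely and extends well above and below the corners. Since every optimal representation must realise both blockers (each is worth more than all the $R_i$ combined), they seal off the left and right sides and the four corner regions, leaving the top and the bottom of $C$ as two isolated slots of width exactly $T$ with no possible overhang. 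The optimum then equals $2P + (\text{maximum leaf width packable into two bins of capacity } T)$, which is $2P + 2T$ if and only if the \prob{Partition} instance is feasible.

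The main obstacle, and the technical heart of the argument, is the geometric lemma that in \emph{every} optimal representation the two blockers are forced into this framing position: that a blocker cannot be slid to cover its side only partially and leak a usable end-gap, that it cannot instead be parked on the top or bottom of $C$, and that consequently no $R_i$ can borrow space on a blocked side or overhang a corner. I expect to establish this by choosing the blockers' dimensions (height far exceeding $1$, width tuned relative to $T$) so that their only simultaneously profit-maximising placement is the intended frame, followed by a short case analysis of the possible contacts around $C$. Once this ``effective capacity is exactly $2T$, split into two slots of $T$'' lemma is in place, the equivalence with \prob{Partition} follows directly and completes the proof.
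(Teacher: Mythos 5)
Your reduction has a genuine gap, and it sits exactly where you flag ``the technical heart'': the lemma forcing the two blockers into the framing position is not just unproven --- it is false, and no tuning of dimensions can rescue it. The difficulty is that profits in \fbcropt{} are attached to edges of the supporting graph, not to positions: a blocker collects its profit $P$ no matter \emph{where} it touches the centre $C$. In particular, both blockers may be parked so that they touch the \emph{top} side of $C$ while overhanging its two top corners; this placement is always geometrically feasible (a contact only needs positive length), loses none of the blocker profit, and hence can never be excluded by choosing the blockers' heights and widths. But once the blockers sit on top, the bottom corners of $C$ and both vertical sides are unsealed, so leaves may overhang. Concretely, take $a_1=a_2=a_3=2$ and $T=3$, an infeasible \prob{Partition} instance: park the blockers at the two top corners and place the three leaves against the bottom of $C$ with $x$-intervals $[-1.9,0.1]$, $[0.1,2.1]$, $[2.1,4.1]$; each has positive-length contact with $C$, so the total profit reaches $2P+2T$ even though no balanced partition exists. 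The claimed equivalence therefore fails: any correct construction must make the \emph{geometry}, not the profit weights, do the sealing.

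That is precisely what the paper's proof does differently. It reduces from \prob{Knapsack} rather than \prob{Partition} (a single capacity constraint suffices, so only one free side is needed --- your insistence on two slots, top and bottom, is what makes the forcing so hard), and it uses \emph{five} blockers, each a $C\times C$ square, exactly as wide as the centre box $c$, each with profit exceeding the total leaf profit. Because a square is as wide as $c$ (and at least as tall as $c$'s vertical side), \emph{any} contact between that square and $c$ forces the square to contain a corner of $c$; it cannot attach ``in the middle'' of a side. Five such squares against four corners then force at least three sides of $c$ to be blocked and one horizontal side to be completely covered, leaving a single slot of width exactly $C$ whose ends are sealed against overhang; the height-$1$ leaves can attach only there, and the optimum equals the blocker profit plus the optimal \prob{Knapsack} value. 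If you want to salvage your argument, keep your \prob{Partition} instance but encode it as a single-bin problem (capacity $T$, weights equal to profits) and adopt centre-sized square blockers in sufficient number; two tall thin blockers can never do the job.
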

\begin{proof}[Sketch]
 We use a reduction from the well-known \prob{Knapsack} problem, where the task is to decide if there exists a subset $S$ of $n$ given items, each with weight $w_i > 0$ and a profit $p_i > 0$, that fits into a knapsack with capacity $C$, i.e., $\sum_{i \in S} w_i \leq C$, and yields a total profit of at least $P$, i.e., $\sum_{i \in S} p_i \geq P$.

 The reduction is similar to the one presented in the proof of Theorem~\ref{thm:trees:hardness}. We define an edge-weighted star $S_I$ with a vertex $v_i$ for each item, a vertex $c$ which is the center of the star, and five vertices $a_1,a_2,a_3,a_4,a_5$ that block all but one side of $c$. The rectangle for each $v_i$ has width $w_i$, height $1$ and the profit for its edge with $c$ is $p_i$. The rectangle for $c$ has width $C$ and height $1$. For $k=1,2,3,4,5$ the rectangle for $a_k$ is a $C \times C$ square and the profit of the edge $a_kc$ is $\sum p_i$, which ensures that in every optimal solution these edges are realized.

 It is now straightforward to check that a subset $S$ of items can be packed into the knapsack if and only if the vertices for $S$ plus $a_1,\ldots,a_k$ can touch $c$. Details are provided in the Appendix.
\end{proof}

%

\subsection{Approximation Algorithms}
\label{sec:approx}

In this section we present approximation algorithms for the \fbcropt{} problem, for certain classes of supporting graphs. As a common tool for our algorithm we use the \prob{Maximum Generalized Assignment Problem} (GAP) defined as follows: Given a set of bins with capacity constraint and a set of items that have a possibly different size and value for each bin, pack a maximum-valued subset of items into the bins. It is known that the problem is \NP{}-complete (\prob{Knapsack} as well as \prob{Bin Packing} are special cases of \prob{GAP}), and there is a polynomial-time $(1-1/e)$-approximation algorithm~\cite{Fleischer2011}. In the remainder we assume that there is an $\alpha$-approximation algorithm for the \prob{GAP} problem, setting $\alpha = 1-1/e$.

\begin{theorem}\label{thm:approx-star}
 There exists a polynomial-time $\alpha$-approximation algorithm for the \fbcropt{} problem if the supporting graph is a star.
\end{theorem}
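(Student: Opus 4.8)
The plan is to reduce \fbcropt{} on a star to the \prob{GAP} problem and then invoke the $\alpha$-approximation for \prob{GAP}. Let $c$ denote the center box, with width $w_c$ and height $h_c$, and let the leaves be $B_1,\dots,B_k$ with profits $p_{c1},\dots,p_{ck}$ (all other profits being $0$). The key observation is that any leaf touching $c$ does so along exactly one of the four sides of $c$: a contact on the top or bottom is horizontal and consumes an interval of length at least its own width on that side, while a contact on the left or right consumes an interval of length at least its height. I would therefore build a \prob{GAP} instance with four bins, one per side of $c$. The top and bottom bins have capacity $w_c$ and a leaf placed there has size $w_i$; the left and right bins have capacity $h_c$ and size $h_i$; in every bin the value of leaf $B_i$ is $p_{ci}$. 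Running the $\alpha$-approximation on this instance drives the algorithm.

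First I would establish the forward direction: every feasible \prob{GAP} assignment yields a feasible representation of equal profit. Given the items packed into each bin, place them consecutively along the corresponding side of $c$, each flush against $c$ and within its extent; the capacity constraint guarantees they do not overlap one another, items on different sides occupy disjoint regions on the four sides of $c$ and so never collide, and each realizes its contact with $c$. Hence the output has total profit equal to the value of the \prob{GAP} solution, which is at least $\alpha\cdot\mathrm{OPT}_{\mathrm{GAP}}$.

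It then remains to show $\mathrm{OPT}_{\mathrm{GAP}} \ge \mathrm{OPT}$, where $\mathrm{OPT}$ is the optimal total profit of \fbcropt{}. I would take an optimal representation and look at the leaves touching a fixed side of $c$: since they share the same boundary line of $c$ and are pairwise disjoint, they appear in a left-to-right (resp. bottom-to-top) order, and all but the two extreme ones must lie strictly inside the side, so their sizes sum to at most the side length, which is exactly the bin capacity. Thus, apart from a constant number of extreme leaves, the optimum induces a feasible \prob{GAP} assignment. The extreme leaves are the boxes that may \emph{overhang} the center near a corner; I would account for them by augmenting the \prob{GAP} instance with a constant number of extra unit-capacity bins, so that the reduction becomes exact and the factor remains $\alpha$. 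The resulting \prob{GAP} instance has $k \le n$ items and $\Oh(1)$ bins, so it is constructed in polynomial time and the $\alpha$-approximation runs in polynomial time.

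The main obstacle is precisely the handling of these overhanging boxes. Two leaves attached to adjacent sides can compete for the same corner region of the plane, so not every way of adding overhanging boxes is simultaneously realizable; one must argue that at most one box can overhang per corner (hence only a constant number overhang in total) and that the interior packings can be contracted by an arbitrarily small amount to free the corner regions, which is legitimate because a contact need only have positive length. Getting this bookkeeping right is what makes the reduction exact and yields the clean factor $\alpha = 1-1/e$ rather than a weaker constant.
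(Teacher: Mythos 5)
Your reduction is, at its core, the same as the paper's: the four sides of the center box become \prob{GAP} bins, item sizes are widths on the horizontal bins and heights on the vertical bins, values are the adjacency profits, and a trash bin absorbs the leaves that do not touch the center. The difference lies in how the corner (overhanging) boxes are handled. The paper guesses the four boxes whose sides contain the corners of the center box, enumerating all polynomially many choices, and runs \prob{GAP} only on the remaining boxes; for the correct guess the corner profits are collected outright and only the side packing is approximated, giving output $\geq P_{\mathrm{corner}} + \alpha\cdot\mathrm{OPT}_{\mathrm{GAP}} \geq \alpha\cdot\mathrm{OPT}$. You instead fold the corners into a single \prob{GAP} instance as four extra unit-capacity bins. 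That variant is fine in spirit: your inequality $\mathrm{OPT}_{\mathrm{GAP}} \geq \mathrm{OPT}$ holds, and your structural observation that at most one box can overhang each corner (two boxes overhanging the same corner from adjacent sides must overlap in a region of positive area) is exactly the fact the paper relies on.

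The genuine problem is your claim that ``the interior packings can be contracted by an arbitrarily small amount to free the corner regions.'' The boxes have \emph{fixed} dimensions, so nothing can be contracted. Concretely: every box whose contact lies on the top side of the center must have its bottom edge on the line supporting that side, so these boxes have pairwise disjoint horizontal extents, and the lengths of their intersections with the top side sum to at most $w_c$. Hence, if the items your \prob{GAP} solution assigns to the top bin have widths summing to exactly $w_c$, they tile the whole side and no overhanging box can obtain a positive-length contact there; if the adjacent side is likewise saturated, the corner item assigned to that corner bin cannot be realized at all. So a \prob{GAP} solution can strictly exceed what is geometrically realizable, and your last step (realization with no loss of value, hence the clean factor $\alpha$) fails precisely in this saturation case, which does occur with fixed integer dimensions. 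The repair is essentially the paper's enumeration device: additionally guess which corners are occupied and which of its two incident sides each corner box contacts, and reduce the capacity of any such side (with integral inputs, ``strictly less than $w_c$'' can be replaced by ``at most $w_c - 1$''); this keeps the number of \prob{GAP} calls polynomial and restores the bound. As written, however, the contraction argument is not valid, and the exactness of your reduction is exactly the point that needs it.
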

\begin{proof}
 Let $B_0$ denote the box corresponding to the center of the star. In any optimal solution for the \fbcropt{} problem there are four boxes $B_1,B_2,B_3,B_4$ whose sides contain one corner of $B_0$ each. Given $B_1,B_2,B_3,B_4$, the problem reduces to assigning each remaining box $B_i$ to at most one of the four sides of $B_0$ which completely contains the contact between $B_i$ and $B_0$; see Fig.~\ref{fig:approx-star}.

 \begin{figure}[t]
  \centering
  \includegraphics[width=8cm]{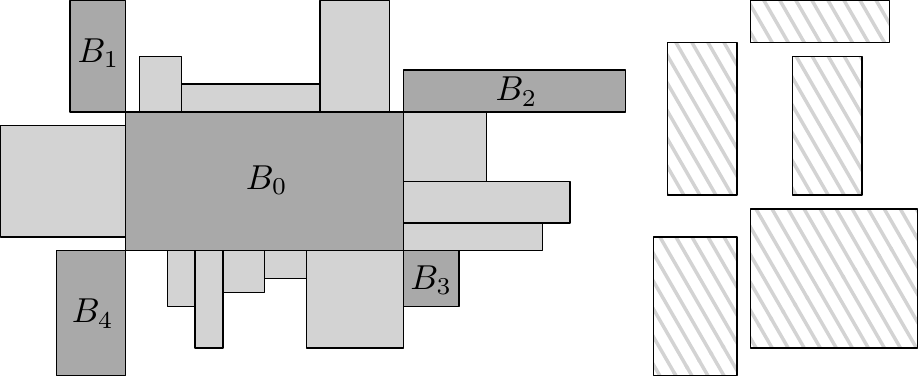}
  \caption{An optimal representation for the \fbcropt{} problem whose supporting graph is a star with center $B_0$. The striped boxes on the right are those from the trash bin.}
  \label{fig:approx-star}
 \end{figure}

 This can be formulated as the \prob{GAP} problem. The four sides are the bins plus a trash bin for all boxes not touching $B_0$, the size of an item is its width for the horizontal bins and its height for the vertical bins (the size for the trash bin is irrelevant), the value of an item is its profit of the adjacency to the central box except for the trash bin where all items have value $0$. We can now apply the algorithm for the \prob{GAP} problem, which will result in the $\alpha$-approximation for the set of boxes. To get an approximation for the \fbcropt{} problem we consider all possible variants of choosing boxes $B_1,B_2,B_3,B_4$, which increases the runtime only by a polynomial factor.
\end{proof}

A \emph{star forest} is a disjoint union of stars. A partition of a graph $G$ into $k$ star forest is a partitioning of the edges of $G$ into $k$ sets, each being a star forest.

\begin{theorem}\label{thm:approx-from-stars}
 If the supporting graph can be partitioned in polynomial time into $k$ star forests, then there exists
 a polynomial-time $\alpha/k$-approximation algorithm for the \fbcropt{} problem.
\end{theorem}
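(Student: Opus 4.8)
The plan is to reduce the problem on $G$ to the single-star case of Theorem~\ref{thm:approx-star}: use the given partition of the edges of $G$ into star forests $F_1,\ldots,F_k$, solve the \fbcropt{} problem approximately on each $F_i$ separately, and return the best of the $k$ resulting representations. Two ingredients need to be established. First, the single-star $\alpha$-approximation lifts to an $\alpha$-approximation for an entire star forest. Second, an averaging argument shows that the best of the $k$ forest-solutions captures an $\alpha/k$ fraction of the global optimum.

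First I would handle a single star forest $F_i$. Since the connected components of $F_i$ are vertex-disjoint stars, no box belongs to two stars of $F_i$. Hence I can run the algorithm of Theorem~\ref{thm:approx-star} on each star component independently and then translate the resulting representations so that they occupy pairwise far-apart regions of the plane. This yields a valid, overlap-free representation of all boxes in which the realized $F_i$-edges are exactly the union of the contacts inside the individual stars (no edge of $F_i$ joins two distinct components, and far-apart placement creates no spurious contacts), so the total profit with respect to $F_i$ equals the sum of the per-star profits. Writing $\mathrm{OPT}(F_i)$ for the optimum of \fbcropt{} with supporting graph $F_i$, vertex-disjointness gives $\mathrm{OPT}(F_i)=\sum_{\text{stars }s}\mathrm{OPT}(s)$: restricting any representation to a star's boxes is a valid representation of that star, yielding $\mathrm{OPT}(F_i)\le\sum_s\mathrm{OPT}(s)$, and placing optimal per-star representations far apart yields the reverse inequality. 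Consequently the combined representation has profit at least $\alpha\sum_s\mathrm{OPT}(s)\ge\alpha\,\mathrm{OPT}(F_i)$, i.e.\ it is an $\alpha$-approximation for $F_i$.

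For the averaging step, let $\mathrm{OPT}$ denote the optimum for $G$ and fix an optimal representation $R^{*}$. Every contact realized in $R^{*}$ corresponds to an edge of $G$, which lies in exactly one forest $F_i$; letting $\mathrm{OPT}_i$ be the profit of those contacts of $R^{*}$ whose edges lie in $F_i$, the edge partition gives $\sum_{i=1}^{k}\mathrm{OPT}_i=\mathrm{OPT}$, hence $\max_i\mathrm{OPT}_i\ge\mathrm{OPT}/k$. Moreover $R^{*}$ is itself a valid representation for the $F_i$-subproblem, so $\mathrm{OPT}(F_i)\ge\mathrm{OPT}_i$. Combining the two ingredients, the best of the $k$ forest-solutions has profit at least $\max_i\alpha\,\mathrm{OPT}(F_i)\ge\alpha\max_i\mathrm{OPT}_i\ge(\alpha/k)\,\mathrm{OPT}$. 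The running time stays polynomial: the partition is computed in polynomial time by assumption, each forest decomposes into at most $n$ stars, and Theorem~\ref{thm:approx-star} runs in polynomial time on each star.

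The step I expect to be the crux is the lifting in the second paragraph: one must argue carefully that the stars of a single forest are genuinely vertex-disjoint, so that their individual realizations can be combined without assigning a box two different placements, and that the far-apart placement neither forces overlaps nor creates extra contacts counted toward the $F_i$-profit. The averaging argument is then routine. A minor point worth stating explicitly is why one returns the single best forest-solution rather than merging all $k$ of them: the forests share vertices, so each box can be placed only once, and it is precisely this conflict that forces the loss of the $1/k$ factor.
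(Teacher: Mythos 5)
Your proposal is correct and follows essentially the same route as the paper: partition the edges into $k$ star forests, argue by averaging that one forest carries at least a $1/k$ fraction of the optimal profit, apply the star algorithm of Theorem~\ref{thm:approx-star} to each star of that forest, and place the resulting representations disjointly. Your write-up merely makes explicit two details the paper leaves implicit (running the algorithm on all $k$ forests and returning the best, since the profitable forest cannot be identified in advance, and the vertex-disjointness of stars within a forest), which are refinements rather than a different argument.
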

\begin{proof}
  Consider any representation with maximum total profit, that is, an optimal solution to the \fbcropt{} problem. Let $E^* \subseteq E$ be the subset of edges that are realized as contacts in this representation, and let $W_{opt}$ be the total profit of this representation. Partition the supporting graph into $k$ star forests. Since the edges of the supporting graph contain all the edges $E^*$, we find a forest $F$ with
 $$\sum_{e \in E(F)\cap E^*} p_e \geq \frac{1}{k} W_{opt}.$$
 Applying Theorem~\ref{thm:approx-star} to each star in $F$ and putting the resulting representations disjointly next to each other, gives the desired representation.
\end{proof}

\begin{corollary}\label{cor:approx}
 There is an approximation algorithm for the \fbcropt{} problem with
 \begin{itemize}
  \item approximation factor $\alpha/2$ if the supporting graph is a tree,
  \item approximation factor $\alpha/6$ if the supporting graph is planar.
 \end{itemize}
\end{corollary}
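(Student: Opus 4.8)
The plan is to derive both bounds directly from Theorem~\ref{thm:approx-from-stars} by exhibiting, for each graph class, a polynomial-time partition of the edge set into few star forests. Concretely, I would show that a tree partitions into $2$ star forests and that a planar graph partitions into $6$ star forests, and then invoke Theorem~\ref{thm:approx-from-stars} with $k=2$ and $k=6$, respectively.

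For the tree case, first I would root the tree $T$ at an arbitrary vertex and assign to every vertex its depth. Orient each edge from child to parent, and place an edge into star forest $1$ if its parent endpoint has even depth and into star forest $2$ otherwise. I would then verify that each part is indeed a star forest: in star forest $1$, every even-depth vertex is the center of the star formed by its downward edges, while every odd-depth vertex is a leaf (its own downward edges land in star forest $2$), so the components are vertex-disjoint stars; the argument for star forest $2$ is symmetric. This decomposition is computed by a single traversal, hence in linear time, and yields the $\alpha/2$ factor.

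For the planar case, I would use the classical fact (a consequence of Euler's formula via the Nash-Williams arboricity formula) that every planar graph decomposes into at most $3$ forests, and that such a decomposition can be found in polynomial time. Applying the tree construction above to each of these three forests splits it into at most two star forests, giving a partition of the planar graph into at most $6$ star forests in polynomial time. Theorem~\ref{thm:approx-from-stars} with $k=6$ then delivers the $\alpha/6$ factor.

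The routine part is the verification that the depth-parity classes are genuinely star forests; the only point requiring care is ensuring both decompositions are constructive in polynomial time, which holds since rooting plus a traversal handles trees and a standard matroid-union (or planar arboricity) algorithm handles the forest decomposition of planar graphs. I do not anticipate a substantial obstacle, as the corollary is essentially a packaging of the known star-arboricity bounds $\mathrm{sa}(\text{tree})\le 2$ and $\mathrm{sa}(\text{planar})\le 2\cdot 3 = 6$ into the framework of Theorem~\ref{thm:approx-from-stars}.
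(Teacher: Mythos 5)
Your proposal is correct and follows essentially the same route as the paper: partition a tree into two star forests, partition a planar graph into three forests and hence six star forests, and invoke Theorem~\ref{thm:approx-from-stars} with $k=2$ and $k=6$ (your depth-parity construction is a valid instantiation of the tree step the paper calls ``easy''). The only difference is cosmetic: the paper obtains the three-forest decomposition of a planar graph via Schnyder woods, which gives linear time, whereas you use Nash--Williams arboricity with a matroid-union algorithm --- both suffice, since Theorem~\ref{thm:approx-from-stars} only requires polynomial time.
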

\begin{proof}
 It is easy to partition any tree into two star forests in linear time. Moreover, every planar graph can be partitioned into three trees in linear time, for example by finding a Schnyder wood~\cite{schnyderwoods}. Then the three trees can be partitioned into six star forests.
 The results now follow directly from Theorem~\ref{thm:approx-from-stars}.
\end{proof}

Our method of partitioning the supporting graph into star forests and choosing the best, is likely not optimal. Nguyen \textit{et al.}~\cite{nguyen2008approximating} show how to find a star forest carrying at least half of the profits of an optimal star forest in polynomial-time. However, we can not guarantee that the approximation of the optimal star forest carries a positive fraction of the total profit in an optimal solution of the \fbcropt{} problem. Hence, approximating the \fbcropt{} problem for general graphs remains an open problem. As a step in this direction, we present a constant-factor approximation for supporting graphs with bounded maximum degree. First we need the following lemma.

\begin{lemma}\label{lem:build-cycle}
 For every set of $n \geq 3$ boxes we can find a representation realizing any given $n$-cycle in linear-time.
\end{lemma}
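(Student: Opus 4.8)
The plan is to give an explicit construction that places the $n$ boxes as a single closed rectilinear ``ring'', so that cyclically consecutive boxes touch and no two boxes overlap (any incidental extra contacts are harmless, since we only need the cycle as a subgraph of the contact graph). The idea is to route the cycle $B_1,\dots,B_n$, in the given cyclic order, around the boundary of a large rectangular frame whose interior is left empty. Boxes on the bottom side are placed in a left-to-right row with vertical (side-by-side) contacts; at the bottom-right corner the chain turns upward; boxes on the right side form a bottom-to-top column with horizontal (stacked) contacts; and so on around the four sides, with the closing contact $B_n$--$B_1$ occurring at the bottom-left corner. Along a straight run each consecutive pair shares a positive-length contact because we align the boxes edge-to-edge, and at a corner we realize a turn by letting the first box of the new run overlap an edge of the last box of the old run by a small positive amount while protruding into the empty exterior, so that no overlap is created.

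The heart of the argument is to make the ring actually close up, i.e.\ to guarantee that after walking around all four sides the last box $B_n$ returns to touch the first box $B_1$. Since the box dimensions are fixed and the cyclic order is prescribed, the total horizontal extent of the bottom run will in general differ from that of the top run, and likewise for the two columns, so a perfectly rectangular frame is \emph{not} attainable. Instead I would exploit the slack available in the construction: every contact may be slid along its shared edge, at each of the four corners the amount of overlap is a free parameter, and for each box we may choose whether its width or its height contributes to the horizontal or to the vertical progress around the loop by choosing the orientation of the turn at that box. These choices should provide enough freedom to satisfy the two metric closure conditions $\Delta x=0$ and $\Delta y=0$ around the loop, after which $B_n$ and $B_1$ can be made to share a positive-length contact.

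Non-overlap would then follow from making the frame large: the four runs live near the four sides of a big rectangle and are pairwise far apart except across the corners, where the protrusions point into empty space, and positive-length contacts are guaranteed by keeping every overlap parameter strictly positive. The base case $n=3$ (a triangle) I would handle directly, placing two boxes side by side and wedging the third against both. All coordinates are produced in a single traversal of the cycle, giving a linear-time algorithm. I expect the main obstacle to be precisely the closure step: proving that the available slack, namely the corner overlaps together with the per-box choice of which dimension advances which coordinate, always suffices to bring $B_n$ back into contact with $B_1$ without forcing an overlap, for \emph{arbitrary} fixed box dimensions. This is where a careful case analysis or a continuity/balancing argument will be needed, and it is what makes the lemma nontrivial despite its intuitive appearance.
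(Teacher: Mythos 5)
Your proposal is not yet a proof: the step you yourself flag as ``the main obstacle'' --- making the ring close up --- is precisely the content of the lemma, and it is left open. Moreover, the slack you enumerate cannot fill this gap in general. Boxes may not be rotated, so in a horizontal run where consecutive boxes meet along vertical segments the run's horizontal extent is \emph{exactly} the sum of the widths of its boxes, and symmetrically for vertical runs; sliding contacts along their shared edges changes nothing about these extents, and the corner overlaps contribute slack bounded by the dimensions of the few boxes sitting at the corners. The imbalance between opposite sides of the frame, however, can be arbitrarily large: take one box of width $W \gg 1$ and $n-1$ unit squares. Whatever arc of the cycle is routed along the side containing the huge box has horizontal extent at least $W$, while the opposite arc has extent at most $n-1$, so no assignment of the cycle to four non-degenerate runs around an empty interior can satisfy $\Delta x = 0$. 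The construction is then forced to degenerate (columns of essentially zero height, the ``top'' run lying directly on the huge box), and handling these degenerate collapses for arbitrary dimensions is exactly the case analysis you defer; it is not a technicality but the whole difficulty.

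The paper's proof sidesteps closure entirely by never letting the two ends of the partial cycle drift apart. It starts by placing $v_1$ on top of $v_n$ (a horizontal contact) and then grows two horizontal chains rightwards, $v_1,v_2,\ldots,v_i$ on top and $v_n,v_{n-1},\ldots,v_j$ below, maintaining the invariant that the boxes placed so far realize the shortened cycle $(v_1,\ldots,v_i,v_j,\ldots,v_n)$, where $v_i$ and $v_j$ share the current rightmost horizontal contact with rightmost point $p$. At each step the chain that ends further left is extended: e.g.\ if $v_i$ has the rightmost right side, then $v_{j-1}$ is placed with its top-left corner at $p$, so that it touches $v_j$ with a vertical contact (realizing the cycle edge $v_{j-1}v_j$) and touches $v_i$ from below (creating the new temporary chord $v_iv_{j-1}$); the symmetric rule handles $v_{i+1}$, and the collinear case and the final box (when $j-i=2$) are handled by wedging a box slightly above $p$ between the two chains. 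Every placement is forced and takes constant time, giving linear time overall, and no global balancing condition ever arises because the cycle is kept closed after every step. This ``zipper'' invariant is the idea missing from your plan.
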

\begin{proof}
 Let $C=(v_1,v_2,\ldots,v_n)$ be any given cycle. We first make boxes $v_1$ and $v_n$ adjacent horizontally; see Fig.~\ref{fig:combined}. We proceed in steps, adding one or two boxes in each step. At each step we consider the rightmost horizontal contact. Let $p$ be the rightmost point in the contact $v_i \cap v_j$. We maintain that if $v_i$ is the box on top and $v_j$ is the box below, then $i < j$ and we have placed precisely the boxes $v_k$ with $k \leq i$ or $k \ge j$.

 \begin{figure}[t]
  \centering
  \includegraphics{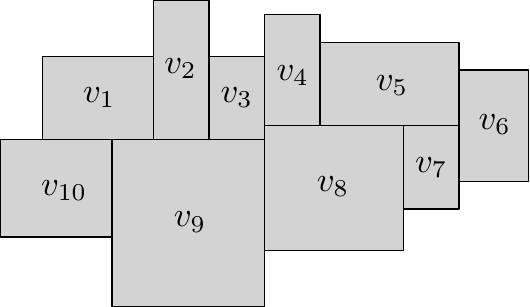}\hspace{1cm}
\includegraphics[width=5cm]{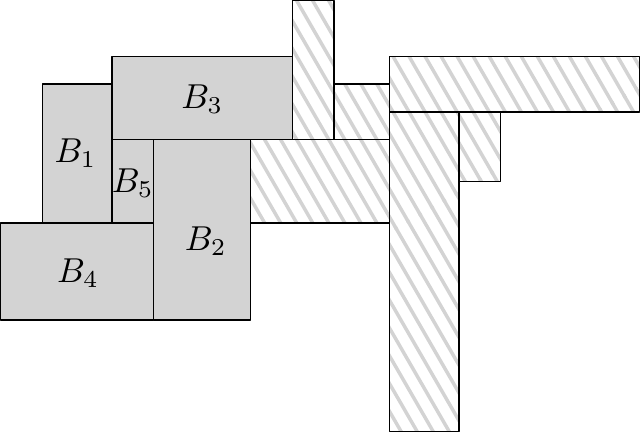}
  \caption{Left: Realizing cycle $(v_1,\ldots,v_{10})$. Right: $8$ adjacencies with $5$ boxes in Lemma~\ref{lem:build-cycle}.}
  \label{fig:combined}
 \end{figure}

 Now consider the box with rightmost right side. If it is $v_i$ we place the box $v_{j-1}$ with its top-left corner onto $p$. If it is $v_j$ we place the box $v_{i+1}$ with its bottom-left corner onto $p$. If the right sides of $v_i$ and $v_j$ are collinear and $j - i > 2$ we place $v_{j-1}$ with its top-left corner slightly above $p$ and $v_{i+1}$ with its bottom-left corner onto the top-left corner of $v_{j-1}$. If $j - i = 2$, that is, $v_{i+1} = v_{j-1}$ is the last box, we place it with its top-left corner slightly above $p$.

 In either case, after each step the current representation realizes a cycle of the form $(v_1,\ldots,v_i,v_j,\ldots,v_n)$ for some $i < j$. In the example in Fig.~\ref{fig:combined} the boxes where added as follows: $\{v_1,v_{10}\}$, $\{v_9\}$, $\{v_2\}$, $\{v_3\}$, $\{v_4,v_8\}$, $\{v_5\}$, $\{v_7\}$, $\{v_6\}$.
\end{proof}

Similar to Theorem~\ref{thm:approx-from-stars}, from Lemma~\ref{lem:build-cycle} we can obtain an approximation algorithm for the \fbcropt{} problem, in case the supporting graph can be covered by few sets of disjoint cycles.

\begin{theorem}\label{thm:approx-from-cycles}
 If one can find in polynomial time $k$ sets of disjoint cycles that together cover the edges of the supporting graph, then one can find in polynomial time a representation with total profit at least $\frac{1}{k} \sum_{i\neq j} p_{ij}.$
 In particular, this is a polynomial-time $1/k$-approximation algorithm for the \fbcropt{} problem.
\end{theorem}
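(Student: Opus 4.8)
The plan is to follow the same averaging idea as in the proof of Theorem~\ref{thm:approx-from-stars}, but to exploit the fact that, by Lemma~\ref{lem:build-cycle}, an \emph{entire} cycle can always be realized, so that (unlike in the star case, where one loses a factor via \prob{GAP}) no further loss is incurred within a single set of cycles. Let $\mathcal{C}_1,\ldots,\mathcal{C}_k$ be the $k$ sets of (vertex-)disjoint cycles covering the edge set $E$ of the supporting graph, and let $W_\ell = \sum_{e \in E(\mathcal{C}_\ell)} p_e$ be the total profit of the edges in $\mathcal{C}_\ell$. Since every edge lies in at least one $\mathcal{C}_\ell$, I would first observe that $\sum_{\ell=1}^k W_\ell \geq \sum_{i \neq j} p_{ij}$, so by averaging there is an index $\ell^*$ with $W_{\ell^*} \geq \frac{1}{k}\sum_{i\neq j} p_{ij}$.

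Next I would build the representation from this single best set $\mathcal{C}_{\ell^*}$. Because the cycles in $\mathcal{C}_{\ell^*}$ are vertex-disjoint, they use pairwise disjoint sets of boxes, so I can apply Lemma~\ref{lem:build-cycle} to each cycle independently (each has length at least $3$) and then place the resulting sub-representations disjointly side by side, parking the finitely many boxes that lie in no cycle of $\mathcal{C}_{\ell^*}$ off to one side. Any box pair not forced to touch can be kept non-adjacent by leaving gaps, and any accidental extra contact between pieces only \emph{increases} the total profit, since all $p_{ij} \geq 0$; hence the constructed representation realizes every edge of $\mathcal{C}_{\ell^*}$ and has total profit at least $W_{\ell^*} \geq \frac{1}{k}\sum_{i\neq j} p_{ij}$. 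This runs in polynomial time: the cover is found in polynomial time by hypothesis, and each cycle is realized in linear time by Lemma~\ref{lem:build-cycle}.

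Finally, for the approximation guarantee I would note that the total profit of any representation, in particular the optimum $W_{opt}$, is at most $\sum_{i \neq j} p_{ij}$, since each pair $\{i,j\}$ can contribute its profit at most once. Combining with the bound above gives total profit at least $\frac{1}{k}\sum_{i\neq j} p_{ij} \geq \frac{1}{k} W_{opt}$, i.e.\ a $1/k$-approximation. The only point requiring care -- and the closest thing to an obstacle -- is the disjointness bookkeeping: one must use that ``disjoint cycles'' means \emph{vertex}-disjoint, so that the separately realized cycles never compete for the same box, and that stitching the pieces together introduces neither overlaps nor any negative side effect, both of which follow immediately from non-negativity of the profits and the freedom to insert gaps between the sub-representations.
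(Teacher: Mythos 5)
Your proof is correct and follows exactly the route the paper intends: the paper states this theorem without an explicit proof, relying on the preceding remark that it follows ``similar to Theorem~\ref{thm:approx-from-stars}, from Lemma~\ref{lem:build-cycle}'', which is precisely your argument of averaging over the $k$ sets, picking the most profitable set, and realizing each of its vertex-disjoint cycles independently via Lemma~\ref{lem:build-cycle}. You also correctly identify the two points the paper leaves implicit -- that realizing a full cycle (unlike a star) loses no further factor, and that the bound is against $\sum_{i\neq j} p_{ij}$ itself rather than only against $W_{opt}$ -- so nothing is missing.
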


\begin{corollary}\label{cor:delta-approx}
 There is a polynomial-time $\frac{2}{\Delta+1}$-approximation algorithm for the \fbcropt{} problem if the supporting graph has maximum degree $\Delta$.
\end{corollary}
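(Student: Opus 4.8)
The plan is to reduce the corollary to Theorem~\ref{thm:approx-from-cycles} by exhibiting, in polynomial time, $k=\lceil \Delta/2\rceil$ sets of vertex-disjoint cycles that together cover all edges of the supporting graph $G$. Theorem~\ref{thm:approx-from-cycles} then immediately gives a polynomial-time $\frac{1}{k}$-approximation, and the desired bound follows from the elementary inequality $\frac{1}{\lceil\Delta/2\rceil}\geq\frac{2}{\Delta+1}$: if $\Delta$ is odd then $\lceil\Delta/2\rceil=\frac{\Delta+1}{2}$ and equality holds, while if $\Delta$ is even then $\lceil\Delta/2\rceil=\frac{\Delta}{2}$ and $\frac{2}{\Delta}>\frac{2}{\Delta+1}$. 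So the whole content of the corollary is the cycle cover of size $\lceil\Delta/2\rceil$.

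To build this cover I would first embed $G$ into a \emph{simple} $r$-regular graph $H$, where $r=2\lceil\Delta/2\rceil$ is the least even integer with $r\geq\Delta$. Such a regular supergraph exists and is computable in polynomial time by a standard regularization construction (note that $r\leq\Delta+1\leq n$, so every vertex deficiency $r-\deg_G(v)$ is at most $n$, and $rn$ is even because $r$ is even). Each vertex of $H$ that is not a vertex of $G$ is then equipped with an arbitrary \emph{dummy} box and is given profit $0$ against every other box. In this way $H$ becomes a legitimate supporting graph over a set of boxes whose profits, restricted to the original boxes, are exactly those of $G$, and whose total profit $\sum_{i\neq j}p_{ij}$ is unchanged.

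Next I would invoke Petersen's $2$-factor theorem to decompose the $2\lceil\Delta/2\rceil$-regular graph $H$ into $\lceil\Delta/2\rceil$ edge-disjoint $2$-factors $F_1,\ldots,F_{\lceil\Delta/2\rceil}$. Each $F_t$ is a spanning union of vertex-disjoint cycles, and since $H$ is simple every such cycle has length at least $3$, exactly as required by Lemma~\ref{lem:build-cycle}; together the $F_t$ partition $E(H)\supseteq E(G)$, so they constitute the sought $k=\lceil\Delta/2\rceil$ sets of disjoint cycles covering the edges of the supporting graph. This decomposition is polynomial: take an Eulerian circuit in each component of $H$, orient the edges along it to obtain an $r/2$-regular bipartite graph on the in/out copies of the vertices, and split it into $r/2$ perfect matchings by K\"onig's edge-colouring theorem, each matching lifting to one $2$-factor. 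Feeding $F_1,\ldots,F_{\lceil\Delta/2\rceil}$ into Theorem~\ref{thm:approx-from-cycles} finishes the argument, the extra contacts created on dummy boxes contributing profit $0$.

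The part I expect to require the most care is the regularization step and the bookkeeping surrounding it: obtaining an even-regular \emph{simple} supergraph (so that Petersen's theorem applies and so that every resulting cycle has length at least $3$), while guaranteeing that the added dummy vertices and their incident edges all carry profit $0$ and therefore cannot weaken the approximation ratio measured against the original total profit $\sum_{i\neq j}p_{ij}\geq W_{opt}$. Once the regular supergraph is fixed, the $2$-factorization and the final appeal to Theorem~\ref{thm:approx-from-cycles} are routine; the only remaining thing to check is the rounding inequality $\frac{1}{\lceil\Delta/2\rceil}\geq\frac{2}{\Delta+1}$, which covers the even and odd cases of $\Delta$ uniformly.
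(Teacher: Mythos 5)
Your proposal is correct and follows essentially the same route as the paper: the paper's entire proof consists of citing Petersen's theorem for the fact that the edges of any graph of maximum degree $\Delta$ can be covered in polynomial time by $\lceil\Delta/2\rceil$ sets of disjoint cycles, and then invoking Theorem~\ref{thm:approx-from-cycles}, which is exactly your reduction (including the implicit rounding $1/\lceil\Delta/2\rceil \geq 2/(\Delta+1)$). Your regularization with zero-profit dummy boxes followed by the $2$-factorization is just the standard proof of that cited cycle-cover fact, spelled out in detail --- and it usefully makes precise a point the paper glosses over, namely that the covering cycles may use vertices outside the supporting graph (without which the cited statement would fail, e.g., for trees).
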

\begin{proof}
 As Peterson shows~\cite{peterson}, the edges of any graph of maximum degree $\Delta$ can be covered by $\lceil\frac{\Delta}{2}\rceil$ sets of cycles, and such sets can be found in polynomial time. The result now follows from Theorem~\ref{thm:approx-from-cycles}.
\end{proof}

\subsection{An Extremal \fbcropt{} Problem}\label{sec:extremal}

Consider a set $\mathcal{B} = \{B_1,\ldots,B_n\}$ of $n$ boxes with fixed dimensions, the complete graph, $G=K_n$, as support graph, and all profits worth 1 unit.
Denote by $f(\mathcal{B})$ the maximum number of adjacencies that can be realized among the $n$ boxes in $\mathcal{B}$. Further we define
$f(n) = \min\{f(\mathcal{B}) \;:\; |\mathcal{B}| = n\}.$

\begin{theorem}\label{thm:extremal}
 For $n = 2,3,4$ we have $f(n) = 2n-3$ and for every $n \geq 5$ we have
 $$f(n) = 2n-2.$$
\end{theorem}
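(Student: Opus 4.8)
The plan is to prove the theorem in two halves: an upper bound argument showing that some configuration of $n$ boxes admits at most $2n-2$ adjacencies (establishing $f(n) \le 2n-2$), and a lower bound construction showing that \emph{every} set of $n$ boxes admits a representation with at least $2n-2$ adjacencies (establishing $f(n) \ge 2n-2$). The small cases $n=2,3,4$ would be handled separately by direct case analysis, since the formula changes there.

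\textbf{Upper bound.} For $f(n) \le 2n-2$, I would exhibit a specific ``bad'' family $\mathcal{B}$ — the natural candidate is $n$ congruent squares (or $n$ boxes of identical dimensions), for which the contact graph is forced to be sparse. The key structural fact is that any contact graph of interior-disjoint axis-aligned rectangles is planar, and moreover for squares of equal size the realizable contact graph has even fewer edges. I would argue that the contact graph of $n$ equal squares can have at most $2n-2$ edges: this should follow from a planarity/Euler-type counting argument combined with the observation that equal squares cannot form the dense triangular contacts that general rectangles can. The main obstacle here is pinning down exactly which configuration is extremal and proving the edge bound tightly rather than loosely; a careless Euler-formula estimate for planar graphs gives $3n-6$, which is too weak, so the argument must exploit the specific geometry (e.g. that each square contributes a bounded number of contacts, or a discharging/corner-counting argument) to sharpen $3n-6$ down to $2n-2$.

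\textbf{Lower bound.} For $f(n) \ge 2n-2$, I would show a universal construction: given \emph{any} $n$ boxes, realize at least $2n-2$ adjacencies. The earlier Lemma~\ref{lem:build-cycle} already gives a representation realizing an arbitrary $n$-cycle, i.e. $n$ adjacencies, for any box set. To push from $n$ up to $2n-2$, I would build on the staircase/cycle construction and show that while laying out the boxes one can realize extra contacts beyond the cycle edges. In fact the right-hand picture in Fig.~\ref{fig:combined} already advertises $8$ adjacencies on $5$ boxes, matching $2n-2 = 8$; so the strategy is to generalize that small example into an inductive layout. I would proceed by induction on $n$: maintain a staircase-shaped representation in which each newly added box is placed into a concavity so that it touches \emph{two} already-placed boxes, each such placement adding two adjacencies, thereby accumulating adjacencies at the desired rate of $2$ per box after a constant-size base.

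\textbf{Assembling the bound.} Combining both directions yields $f(n) = 2n-2$ for $n \ge 5$, and the linear-time claim follows because the lower-bound construction is a single sweep placing one box per step, each step doing constant work (exactly as in Lemma~\ref{lem:build-cycle}). I expect the genuinely delicate step to be the upper bound: the lower bound is essentially an explicit algorithm and an amortized count, whereas the upper bound requires identifying the worst-case box family and proving a matching impossibility result, which is where the geometry of equal-sized boxes — and the fact that four boxes at a point can share at most a limited number of mutual contacts — must be used carefully to beat the generic planar bound.
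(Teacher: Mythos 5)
The genuine gap is in your upper bound, and it is fatal to that half of the plan: congruent squares are not an extremal family, and the statement you propose to prove about them---that $n$ equal squares admit at most $2n-2$ contacts---is false. Arrange $n=km$ unit squares in a brick wall ($m$ rows of $k$ squares, alternate rows shifted by $1/2$): squares adjacent within a row touch along vertical sides, and each square in a row touches two squares of the row below, giving $3n-2k-2m+1$ contacts in total, which exceeds $2n-2$ already for moderate $n$ (e.g.\ $k=m=4$, $n=16$ yields $33>30$) and is $3n-O(\sqrt{n})$ asymptotically. Note also that equal squares \emph{do} form triangles (two row-neighbors plus the square straddling them from above), so the intuition that equal sizes preclude dense triangular contact patterns is exactly backwards: it is the equality of sizes that enables these dense patterns. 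Consequently no discharging or corner-counting refinement can sharpen $3n-6$ down to $2n-2$ for this family, because the bound you would be trying to prove simply does not hold. The paper's extremal family is the opposite of yours: squares of pairwise \emph{very different} side lengths $2^1,2^2,\ldots,2^n$. For that family one partitions the realized contacts into horizontal ones and vertical ones and shows that, because of the exponentially growing side lengths, neither class can contain a cycle (a cycle of horizontal contacts would force a $\{-1,0,+1\}$-signed combination of the distinct heights $2^i$ to vanish unless the cycle alternates up/down at every vertex, and alternating cycles are excluded geometrically); two forests give at most $2(n-1)=2n-2$ contacts.

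Your lower-bound half, by contrast, is essentially the paper's own argument: place five boxes to realize $8=2\cdot5-2$ contacts as in Fig.~\ref{fig:combined}, then insert each remaining box so that it touches two already-placed boxes by the staircase procedure of Lemma~\ref{lem:build-cycle}, for $8+2(n-5)=2n-2$ in total, in linear time. The one detail you gloss over is that the initial five boxes cannot be placed arbitrarily: the paper chooses which boxes play which role by size (the two of largest height, then the two widest among the remaining three) so that the $8$-contact base configuration is actually realizable; your induction should state this selection explicitly before it can start.
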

\begin{proof}
 It is easy to verify the lower bound for the base cases $f(n) \geq 2n-3$ for $n=2,3,4$. So let $n \geq 5$ and fix $\mathcal{B} = \{B_1,\ldots,B_n\}$ to be any set of $n$ boxes. We have to show that $f(\mathcal{B}) \geq 2n-2$, i.e., that we can position the boxes so that $2n-2$ pairs of boxes touch. We start by selecting five arbitrary boxes $B_1,B_2,B_3,B_4,B_5$. Without loss of generality, let $B_1$ and $B_2$ be the boxes with largest height, and $B_3$ and $B_4$ be the boxes with largest width among $\{B_3,B_4,B_5\}$. We place the five boxes as in Fig.~\ref{fig:combined}.
 The remaining $n-5$ boxes are added to the picture in any order in such a way that every box realizes two adjacencies at the time it is placed. To this end it is enough to apply the procedure described in Lemma~\ref{lem:build-cycle} taking $B_2,B_3$ as the first two boxes.

Next consider the upper bounds. We have $f(n) \leq 2n-3$ for $n=2,3$ simply because a pair of boxes can touch only once. We have $f(4) \leq 5$ because contact graphs of boxes are planar graphs in which every triangle is an inner face, which rules out $K_4$.
 So let $n \geq 5$. We show that $f(n) \leq 2n-2$, by constructing a set of $n$ boxes for which, in any arrangement of the boxes, at most $2n-2$ pairs of boxes touch. For $i=1,\ldots,n$ we define $B_i$ to be a square box of side length $2^i$. Consider any placement of the boxes $B_1,\ldots,B_n$. We partition the contacts into horizontal contacts and vertical contacts, depending on whether the two boxes touch with horizontal sides or vertical sides. From the side length of boxes, it now follows that neither set of contacts contains a cycle, i.e., consists of at most $n-1$ contacts. This gives at most $2n-2$ contacts in total.
\end{proof}

\section{The \fbcrarea{} problem}\label{sec:area}
Not all contact representations realizing the same adjacencies are equally practically useful (or visually appealing) when viewed as word clouds. Here we consider the \fbcrarea{} problem and show that finding a ``compact'' representation, fitting into a small bounding box, is another hard problem. In particular, we are given a supporting graph $G$, which is known to be realizable
and the goal is to find a representation that still realizes $G$ and additionally fits into a small bounding box.

The reductions are from the (strongly) \NP{}-hard $2$D \prob{Strip Packing} problem, defined as follows. We are given a set $R = \{r_1, r_2, \dots r_n\}$ of $n$ rectangles with height and weight functions: $w: R \rightarrow \N$, $h: R \rightarrow \N$. All the widths and heights are integers bounded by some polynomial in $n$. We are also given a strip of  width $W$ and infinite height and a positive integer $H$, also bounded by a polynomial in $n$. The task is to pack the given rectangles into the strip such that the total height is at most $H$.

The \prob{Strip Packing} problem is actually equivalent to the \fbcrarea{} problem when the supporting graph is an $n$-vertex independent set, because it boils down to deciding whether all the rectangles can be packed into a bounding box of dimensions $W \times H$. However, edges in the supporting graph impose additional constraints on the representation, which might make the \fbcrarea{} problem easier. The following theorem (proof is in the Appendix) shows that this is not the case.

\begin{theorem}
 \fbcrarea{} is \NP{}-hard, even if the supporting graph is a path.
\label{th-path}
\end{theorem}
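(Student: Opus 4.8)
The plan is to reduce from 2D \prob{Strip Packing}, mirroring the remark preceding the theorem that \prob{Strip Packing} is exactly \fbcrarea{} on an independent set. The obstacle is that adding path-edges to the supporting graph \emph{forces} certain contacts, which restricts the allowed representations; I must ensure these forced contacts do not help the packer circumvent the bounding-box constraint. So the core idea is to build a path whose edges can always be realized ``for free'' inside essentially any packing of the strip-packing rectangles, so that a realization fitting in a small box exists if and only if the original rectangles fit in the strip.

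First I would take an instance of \prob{Strip Packing}: rectangles $R=\{r_1,\ldots,r_n\}$ with integer widths and heights bounded by a polynomial, strip width $W$, and target height $H$. I would create one box $B_i$ for each rectangle $r_i$, with $w_i = w(r_i)$ and $h_i = h(r_i)$. To obtain a \emph{path} as the supporting graph I would interleave these with thin ``connector'' boxes: between consecutive rectangle-boxes insert a small box (say of width $1$ and height $1$, or even negligible/unit dimensions) and make the supporting graph the path $B_1 - d_1 - B_2 - d_2 - \cdots - d_{n-1} - B_n$, where the $d_j$ are the connectors. The connectors are forced to touch their two neighbors, but because they are tiny they can be tucked into the boundary of any valid packing without consuming meaningful area. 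The key point is to argue that for \emph{any} packing of the $r_i$ into a $W \times H$ strip there is a way to slot in the connectors (possibly after slightly enlarging the bounding box by a controlled, additive amount that I fold into the construction) so that the path is realized; conversely, any realization of the path yields in particular a placement of the rectangle-boxes, which is a valid strip packing if the bounding box is small enough.

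The main technical work — and the main obstacle — is the forward direction: showing that every strip packing can be \emph{augmented} to a path-realization without blowing up the area. A naive placement of connectors may fail because the two rectangle-boxes a connector must join could be far apart in the packing. The cleanest fix is to scale: multiply all strip-packing dimensions by a large factor $\lambda$ (polynomial in $n$) so that each rectangle-box is huge compared to a unit connector, and reserve a thin strip of width $1$ along, say, the left edge of the bounding box as a ``highway'' where connectors can run vertically and make contact with the boxes. Equivalently, one routes the path along the boundary of the packed region. I would set the target bounding-box dimensions for \fbcrarea{} to be $(W\lambda + O(1)) \times (H\lambda + O(1))$, chosen so that a packing of height $\le H$ scales to fit but a packing of height $> H$ does not, even accounting for the additive slack from the connectors. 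Since all quantities are polynomially bounded and scaling by a polynomial $\lambda$ preserves this, the reduction is polynomial and \prob{Strip Packing} is strongly \NP{}-hard, so the gap between ``fits'' and ``does not fit'' survives the additive connector slack.

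Finally I would verify the reverse direction, which is routine: given a representation realizing the path inside the target box, restrict attention to the rectangle-boxes $B_i$; they are pairwise non-overlapping and lie within a box of width $\le W\lambda + O(1)$ and height $\le H\lambda + O(1)$, so after dividing coordinates by $\lambda$ and discarding the $O(1)$ slack they witness a packing of $R$ into the strip of width $W$ and height $H$. Combined with membership in \NP{} (a representation is a polynomial-size certificate checkable in polynomial time), this establishes that \fbcrarea{} is \NP{}-hard even when the supporting graph is a path. The same connector trick immediately adapts to cycles (close the path by adding one more connector edge $B_n - d_n - B_1$) and, trivially, to independent sets, matching the theorem's claimed scope.
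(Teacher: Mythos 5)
Your reduction source (\prob{Strip Packing}) and overall plan -- interleave the strip-packing rectangles with small ``connector'' boxes to form a path, add a little slack to the bounding box so the connectors fit, and prove equivalence in both directions -- coincide with the paper's. But there is a genuine gap in your forward direction, and it is exactly the obstacle you name and then do not overcome: you use a \emph{single} small connector box $d_j$ per path edge. For the path to be realized, $d_j$ must simultaneously touch $B_j$ and $B_{j+1}$, so these two boxes must lie within the connector's size of each other. An arbitrary strip packing places consecutive boxes of your (fixed-in-advance) path order arbitrarily far apart, and in general no repacking within essentially the same bounding box makes all consecutive pairs close: consider $n$ unit squares tightly packed in a $\sqrt{n}\times\sqrt{n}$ box. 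Every $B_i$ must touch at least one connector, so if all connectors live in a width-$1$ ``highway'' along the left edge, every $B_i$ must reach that strip -- but only $O(\sqrt{n})$ boxes can. Scaling by $\lambda$ makes this worse, not better, since it increases the distances a unit connector would have to bridge, and a single rectangle of fixed dimensions cannot ``run along the boundary'' or turn corners, so the highway metaphor does not yield a valid representation.

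The paper overcomes precisely this point by making each path edge not one connector but a \emph{string of $k$ tiny $x\times x$ squares}, i.e., between $r'_i$ and $r'_{i+1}$ the path contains $k$ extra vertices. Such a string is flexible: it can snake through thin ``gaps'' obtained by blowing the unit grid lines up to thickness $d=\epsilon/\max(W,H)$, detour around rectangles that block the way, and fold up its unused length inside a gap. Two budgets are balanced by choosing $k$ and $x$ with $kx = 4(n+3)(H+2nW)$ (each string is long enough to be routed between any two rectangles along the gaps, going around at most $n$ obstacles) and $n(kx^2+2x) = d$ (the total area of all strings passing through or folded inside any single gap fits in that gap, so the bounding box grows only from $W\times H$ to $W'\times H' < (W+1)\times(H+1)$, preserving the yes/no gap of the integer instance). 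To repair your proof you need an analogue of this idea: per-edge connector material whose total \emph{length} is polynomially large but whose total \emph{area} is negligible, together with an explicit routing argument; one connector of fixed size per edge cannot work, no matter how the scaling factor $\lambda$ is chosen.
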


\section{Experimental Results}\label{sec:experimental}

We implemented the algorithm from Corollary~\ref{cor:approx} for planar graphs (referred to as \textsc{Planar}) and compared it with the algorithm from~\cite{context_preserving_dynamic} (referred to as \textsc{CPDWCV}). Our data set is 120 Wikipedia documents, with 400 words or more. For the word clouds we chose the 100 most frequent words (after removing stop-words, e.g., ``and'', ``the'', ``of''), and constructed supporting graph $G$ with $100$ vertices. Details are provided in the Appendix.

We compare the percentage of realized profit in the representation of $G$
for the two algorithms. Since \textsc{Planar} handles planar supporting graphs, we first extract a maximal planar subgraph $G_{\text{planar}}$ of $G$, and then we apply the algorithm on $G_{\text{planar}}$. For \textsc{CPDWCV} we compute the results for graph $G$. The percentage of realized profit is presented in the table.
Our results indicate that, in terms of the realized profit, \textsc{Planar} performs significantly better than the heuristic \textsc{CPDWCV}. Although we only prove a $\frac{1}{6}\left(1 - \frac{1}{e}\right) \approx 0.1054$-approximation for planar graphs (Corollary~\ref{cor:approx} in combination with Theorem~\ref{thm:approx-star}), in practice \textsc{Planar} realizes more than $25 \%$ of the total profit of planar graphs.

\vspace{10pt}
\begin{center}
  \newcolumntype{Y}{>{\centering\arraybackslash}X}
  \newcolumntype{Z}{>{\raggedleft\arraybackslash}X}
  \begin{tabularx}{0.9\textwidth}{YYl}
    Algorithm & Realized Profit of $G$\:\:\: & Realized Profit of $G_{planar}$ \\
    \toprule
    \textsc{Planar} & $8.56 \%$ & $27.48 \%$ \\
    \textsc{CPDWCV} & $0.77 \%$ &  \\
  \end{tabularx}
\end{center}
\vspace{10pt}


\section{Conclusions and Future Work}\label{sec:conclusions}


We formulated the Word Rectangle Adjacency Contact (\fbcr{}) problem, motivated by the desire to provide theoretical guarantees for semantic-preserving word cloud visualization. We described efficient polynomial-time algorithms for variants of \fbcr{}, showed that some variants are \NP{}-complete, and described several approximation algorithms. A natural open problem is to find an approximation algorithm for general graphs with arbitrary profits.

\medskip\noindent{\bf Acknowledgements:} Work on this problem began at Dagstuhl Seminar 12261.  We thank the organizers, participants, and especially Steve Chaplick, Sara Fabrikant, Anna Lubiw, Martin N\"ollenburg, Yoshio Okamoto, G\"unter Rote, Alexander Wolff.

\bibliographystyle{abbrv}
{\small
\bibliography{literature,refs}
}

\newpage\section*{Appendix}


\begin{proof}[Proof of Theorem~\ref{thm:quasi-triangulated}]
 Let $G$ be the supporting, quasi-triangulated graph. We consider $G$ embedded in the plane  with outer face $\{v_N,v_E,v_S,v_W\}$. Note that this embedding is unique. Abusing notation, we refer to a vertex and its corresponding box with the same letter.

 We begin by placing a horizontal and a vertical ray emerging from the same point in positive $x$-direction and positive $y$-direction, respectively. For the first phase of the algorithm let us pretend that the horizontal ray is the box $v_S$ (imagine a rectangle with tiny height and huge width) and the vertical ray is the box $v_W$ (imagine a rectangle with tiny width and huge height), independent of how the actual boxes look like; see Fig.~\ref{fig:quasi-triangulated_1}.

 \begin{figure}[h]
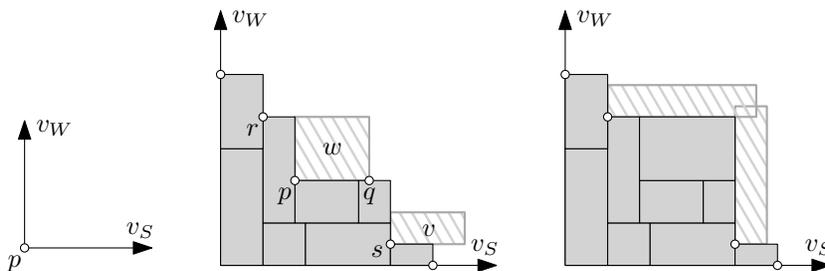

  \centering
  \subfloat{\includegraphics{quasi-triangulated-start.pdf}
  }\hspace{2em}
  \subfloat{\includegraphics{quasi-triangulated-intermediate.pdf}
  }
  \hspace{2em}
  \subfloat{\includegraphics{quasi-triangulated-infeasible.pdf}
  }
  \caption{Left: starting configuration with rays $v_S$ and $v_W$.
Center: representation at an intermediate step: vertex $w$ fits into concavity $p$ and is applicable, vertex $v$ fits into concavity $s$ but is not applicable. Adding box $w$ to the representation introduces new concavity $q$, and the vertex at concavity $r$ may become applicable. Right: there is no applicable vertex and the algorithm terminates.}
  \label{fig:quasi-triangulated_1}
 \end{figure}

 We build up a representation by adding one rectangle at a time. At every intermediate step the representation is \emph{rectilinear convex}, that is, its intersection with any horizontal or vertical line is connected. In other words, the representation has no holes and a ``staircase shape''. We maintain the set of all \emph{concavities}, that is, points on the boundary of the representation, which are bottom-right or top-left corners of some rectangle but not a top-right corner of any rectangle. Initially there is only one concavity, namely the point where the rays $v_W$ and $v_S$ meet.

 Each concavity $p$ is a point on the boundary of two rectangles, say $u$ and $v$. Since $G$ has no separating triangles there are exactly two vertices that are adjacent to both, $u$ and $v$, or only one if $\{u,v\} = \{v_S,v_W\}$. For exactly one of the these vertices, call it $w$, the rectangle is not yet placed because its bottom-left corner is supposed to be placed on the concavity $p$. We say that $w$ \emph{fits into the concavity $p$}. We call a vertex $w$ \emph{applicable} to an intermediate representation if it fits into some concavity and adding the rectangle $w$ gives a representation that is rectilinear convex. In the very beginning the unique common neighbor of $v_S$ and $v_W$ is applicable.

 The algorithm proceeds in $n-4$ steps as follows. At each step we identify a inner vertex $w$ of $G$ that is applicable to the current representation. We add the rectangle $w$ to the representation and update the set of concavities and applicable vertices. At most two points have to be added to the set of concavities, while one is removed from this set. The vertices that fit into the new concavities can easily be read off from the plane embedding of $G$. Checking whether these vertices are applicable is easy. If the top-left or bottom-right corner of $w$ does not define a concavity then one has to check whether the vertices that fit into existing concavities to the left or below, respectively, are now applicable. So each step can be done in constant time.

 If the algorithm has placed the last inner vertex, it suffices to check whether the representation without the two rays is a rectangle, that is, whether there are exactly two concavities left. If so, call this rectangle $R$, we check whether the width of $R$ is at most the width of $v_N$ and $v_S$ and whether the height of $R$ is at most the height of $v_E$ and $v_W$. If this holds true, we can easily place the rectangles $v_N$, $v_E$, $v_S$, $v_W$ to get a representation that realizes $G$. The total running time is linear.

 On the other hand, if the algorithm stops because there is no applicable vertex, or the height/width-conditions in the end phase are not met, then there is no representation that realizes $G$. This is due to the lack of choice in building the representation -- if a vertex $v$ is applicable to a concavity $p$ then the bottom-left corner of $v$ has to be placed at $p$ in order to establish the contacts of $v$ with the two rectangles containing $p$.
\end{proof}


\begin{proof}[Proof of Theorem~\ref{thm:star-hardness}]
 We use a reduction from \prob{Knapsack}, which is defined as follows. Given a set of $n$ items, each with a positive weight $w_i$, $i=1,\ldots,n$, a positive profit $p_i$, $i=1,\ldots,n$, a knapsack with some positive capacity $C$, and a positive number $P$, the task is to find a subset of items whose sum of weights does not exceed $C$ and whose sum of profits is at least $P$. This classical problem is known to be weakly \NP{}-complete.

 The reduction is similar to the one presented in the proof of Theorem~\ref{thm:trees:hardness}. Given an instance $I = \{(w_1,p_1),\ldots,(w_n,p_n),C,P\}$ of \prob{Knapsack} we define an edge-weighted star $S_I$ on $n+5$ vertices as follows. There is a vertex $v_i$ for each $i=1,\ldots,n$, a vertex $c$, and five vertices $a_1,a_2,a_3,a_4,a_5$. Vertex $c$ is the center of the star $S_I$, its edge to $v_i$ has weight $p_i$ for $i=1,\ldots,n$, and its edge to $a_k$ has weight $\sum_{i=1}^n p_i$ for $k = 1,2,3,4,5$; see Fig.~\ref{fig:star-hardness}.

 \begin{figure}[h]
  \centering
  \subfloat{\includegraphics[width=.25\textwidth]{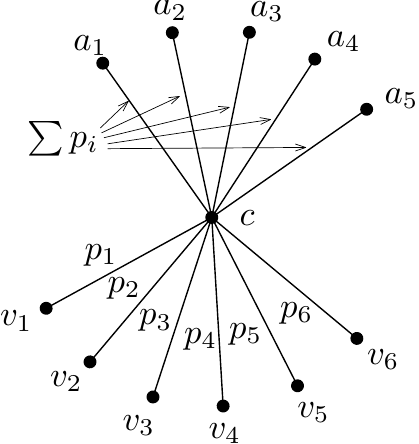}
  }\hspace{3em}
  \subfloat{\includegraphics{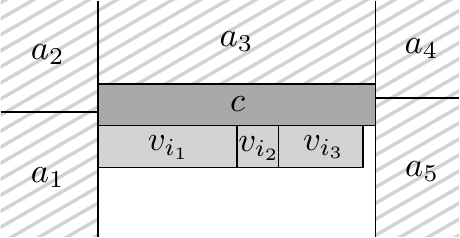}
  }
  \caption{Left: edge-weighted star $S_I$, defined from instance $I$ of the \prob{Knapsack} problem. Right: optimal solution to\fbcropt{} for $S_I$.}
  \label{fig:star-hardness}
 \end{figure}

As before, we use $v \to (h,w)$ to define the box of $v$ with height $h$ and width $w$. We define $v_i \to (1,w_i)$ for $i=1,\ldots,n$, $a_k \to (C,C)$ for $k=1,2,3,4,5$, and $c \to (1,C)$. Finally, we define the target profit in the \fbcropt{} problem  $P_I=5\sum_{i=1}^n p_i + P$.

 We claim that an instance $I$ of the \prob{Knapsack} problem is feasible if and only if the instance of the \fbcropt{} problem corresponding to $S_I$ is feasible. From any solution of the \fbcropt{} problem we can read off a solution for the \prob{Knapsack} problem.

 First note that every solution of the \fbcropt{} problem has total profit strictly more than $5\sum_{i=1}^n p_i$. Thus all adjacencies between $c$ and $a_k$ for $k=1,2,3,4,5$ are realized and each $a_k$ contains a corner of $c$. It follows that at least three sides of $c$ are partially covered by some $a_k$ and at least one horizontal side of $c$ is completely covered by some $a_k$. Because $c$ has height $1$ none of the boxes $v_1,\ldots,v_n$ (each of height $1$) touches $c$ on the side. Hence each $v_i$ touches $c$ (if at all) on a horizontal side, say the bottom; see Fig.~\ref{fig:star-hardness}.

 Now the bottom side of $c$ has width $C$ and each box $v_i$ has width $w_i$, $i=1,\ldots,n$. Thus the subset of $J \subseteq \{1,\ldots,n\}$ of indices of boxes that touch $c$ satisfies $\sum_{j \in J} w_j \leq C$. Moreover the total profit of the representation is $5\sum_{i=1}^n p_i + \sum_{j \in J} p_j$, which is at least $P_I$ if and only if $\sum_{j \in J} p_j \geq P$, that is, the items with indices in $J$ are a solution of the \prob{Knapsack} problem.

 Along the same lines, we can construct a solution for the \fbcropt{} problem based on any solution of the \prob{Knapsack} problem, and this concludes the proof.
\end{proof}

\begin{proof}
 We use a reduction from \prob{Strip Packing}, so fix any instance $I$ of \prob{Strip Packing} consisting of rectangles $r_1,\ldots,r_n$ and two integers $H$ and $W$. Let $d = \frac{\epsilon}{\max(W,H)}$ for some $\epsilon \in (0,1)$.

 We define an instance of the \fbcrarea{} problem by slightly increasing the heights and widths in $I$. The idea is to lay a unit square grid over the strip and blow each grid line up to have a thickness of $d$; see Fig.~\ref{fig:paths-hard-grid}. Each rectangle in $I$ is stretched according to the number of grid lines is intersects.

 \begin{figure}[h]
  \centering
  \subfloat{\includegraphics[width=0.4\textwidth]{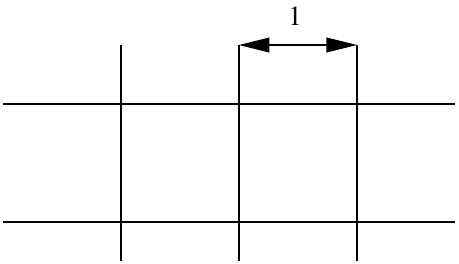}}
  \hspace{0.05\textwidth}
  \subfloat{\includegraphics[width=0.4\textwidth]{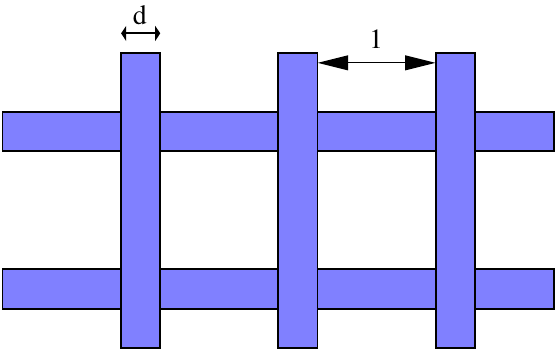}}
  \caption{Grid before and after stretching}
  \label{fig:paths-hard-grid}
 \end{figure}

 More precisely, we define for $i=1,\ldots,n$ a rectangle $r'_i$ of width $w(r_i) + (w(r_i) - 1) d$ and height $h(r_i) + (h(r_i) - 1) d$. Further we define $W' = W + (W-1)d$ and $H' = H + (H-1)d$. Finally, we arrange the rectangles $r'_1,\ldots,r'_n$ into a path $P$ by introducing between $r_i$ and $r_{i+1}$ ($i=1,\ldots,n-1$), as well as before $r'_1$ $k$ small $x \times x$ square, called \emph{connector squares}. We choose $k$ and $x$ to satisfy

 \begin{align}
  kx &= 4(n+3)(H + 2nW) \hspace{4em}\text{and}\label{eqn:connector-length}\\
  n(kx^2 + 2x) &= d.    \label{eqn:connector-space}
 \end{align}

 In particular, we choose
 \begin{align*}
  x &= \frac{d}{2n(2Hn+6H+4n^2W+12nW+1)} \hspace{4em}\text{and}\\
  k &= \frac{4(n+3)(H+2nW)}{x}.
 \end{align*}

 We claim that there is a representation realizing $P$ within the $W' \times H'$ bounding box if and only if the original rectangles $r_1,\ldots,r_n$ can be packed into the original $W \times H$ bounding box.

 First consider any representation realizing $P$ within the $W' \times H'$ bounding box and remove all connector squares from it. Since $W' < W + \epsilon < W + 1$ and $H' < H + \epsilon < H + 1$, the stretched bounding box has the same number of grid lines than the original. Hence the rectangles $r'_1,\ldots,r'_n$ can be replaced by the corresponding rectangles $r_1,\ldots,r_n$ and perturbed slightly such that every corner lies on a grid point. This way we obtain a solution for the original instance of \prob{Strip Packing}.


 Now consider any solution for the \prob{Strip Packing} instance, i.e., any packing of the rectangles $r_1,\ldots,r_n$ within the $W \times H$ bounding box. We will construct a representation realizing the path $P$ within the $W' \times H'$ bounding box. We start blowing up the grid lines of the $W \times H$ bounding box to thickness $d$ each, which also effects all rectangles intersected by a grid line in its interior. This way we obtain a placement of bigger rectangles $r'_1,\ldots,r'_n$ $I'$ in the bigger $W' \times H'$ bounding box, such that every rectangle $r'_i$ intersects the interiors of exactly those blown-up grid lines corresponding to the grid lines that intersect $r_i$ interiorly. Thus any two rectangles $r'_i$ and $r'_j$ are separated by a vertical or horizontal corridor of thickness at least $d$. We will refer to the grid lines of thickness $d$ as \emph{gaps}.

 It remains to place all the connector square so as to realize the path $P$. The idea is the following. We start in the lower left corner of the bounding box, and lay out connector squares horizontally to the right inside the bottommost horizontal gap until we reach the vertical gap that contains the lower-left corner of $r'_1$. We then start laying out the connector squares inside this vertical gap upwards, until we reach the lower-left corner of $r'_1$. Whenever a rectangle $r'_i$ overlaps with this vertical gap, we go around $r'_i$ as illustrated in Fig.~\ref{fig:paths-conn-rerouting-after}. This way we lay out at most $(3W'+H')/x$ connector squares, which by~\eqref{eqn:connector-length} is less than $k$. The remaining connector squares are ``folded up'' inside the vertical gap; see Fig.~\ref{fig:hardness-path-folding}.

%
%
 \begin{figure}[h!]
  \begin{tabular}{cc}
   \subfloat[\NP{} hardness proof for \fbcrarea{} of paths]{\label{fig:hardness-path-packing} \includegraphics[width=.5\textwidth]{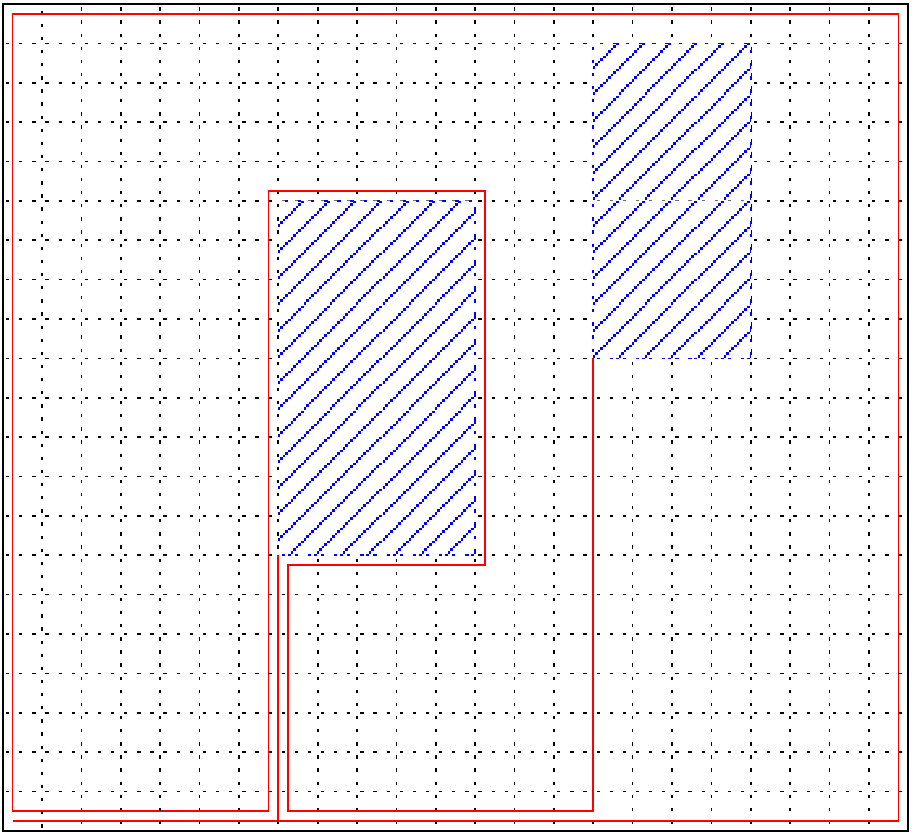}}
   & \subfloat[Folding of connector rectangles inside a gap]{\label{fig:hardness-path-folding}\includegraphics[height=.4\textheight]{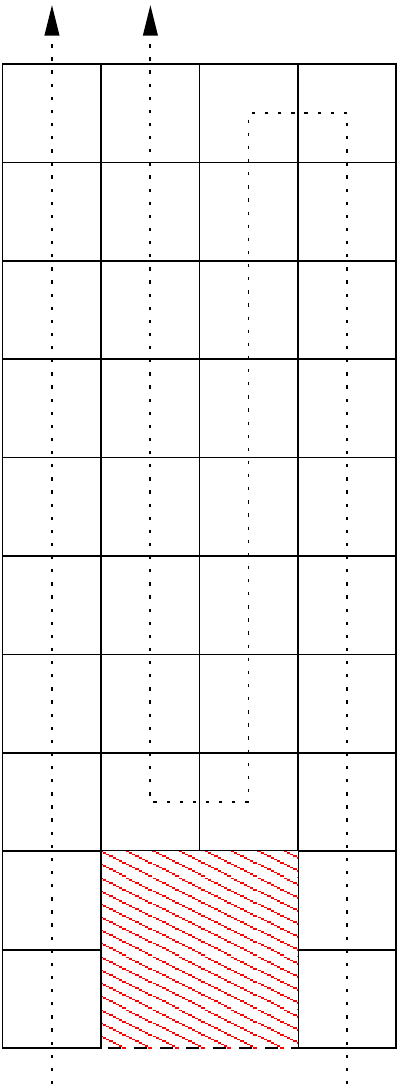}}\\
   \subfloat[Connectors before rerouting]{\label{fig:paths-conn-rerouting-before}\includegraphics[height=.35\textheight]{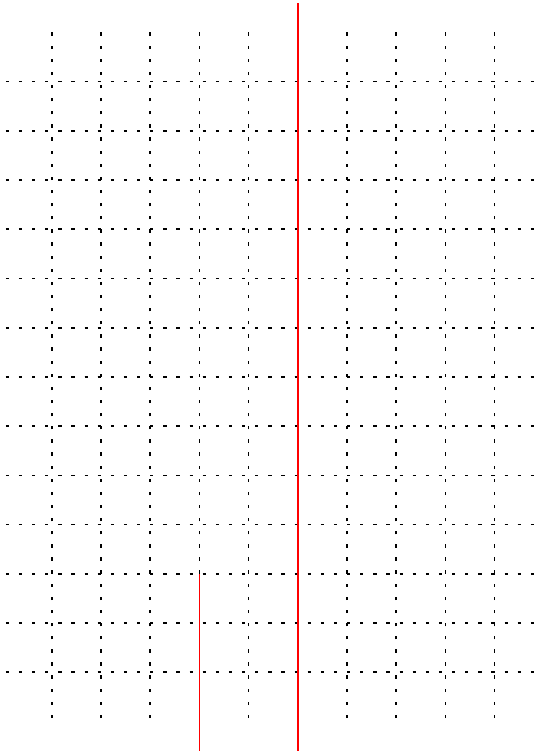}}
   & \subfloat[Connectors after rerouting]{\label{fig:paths-conn-rerouting-after}\includegraphics[height=.35\textheight]{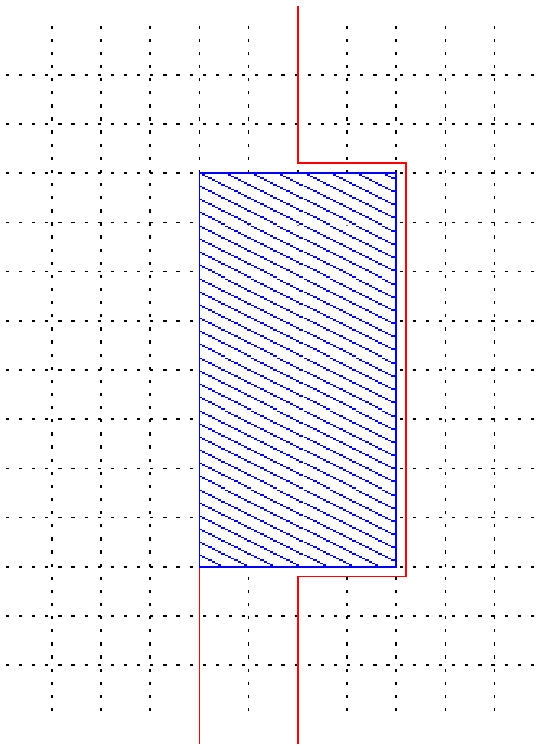}}
  \end{tabular}
\caption{Illustrations for Theorem~\ref{th-path}.}
 \end{figure}

 Next we lay out the connectors squares between $r'_1$ and $r'_2$. We start where we ended before, i.e., at the lower-left corner of $r'_1$, and go the along the path we took before till we reach the bottommost gap. Then we lay connector squares along the outermost gaps in counterclockwise direction, i.e., first horizontally to the rightmost gap, then up to the topmost gap, left to the leftmost gap, and down to the bottommost gap. Now we do the same for $r'_2$ than what we did for $r'_1$. If while going right we ``hit'' the connector squares going up to $r'_1$, we follow them up, go around $r'_1$, and go down again. This is possible since there are gaps all around $r'_1$; see Fig.~\ref{fig:hardness-path-packing}. Note that the red line of connectors will actually sit \textit{on} the dashed, expanded grid lines but are drawn next to them for better readability.


 We repeat this for all the rectangles.

 We have to show two things: The number of connector squares between two $r'_i$ and $r'_{i+1}$ is large enough so that the length of the string of connectors is sufficient. And that the gaps have sufficient space so that we can fold up the connectors in them.

 The first condition is taken care of by equation~\eqref{eqn:connector-length}. We divide the path of the connectors in up to $n + 3$ parts: The first part $p_{{\text{down}}_i}$ is going down from $r'_i$ to the bottom gap. The second part $p_{\text{circle}}$ that goes around the bounding box in counterclockwise order to the vertical gap containing the lower-left corner of $r'_{i+1}$. This part is intercepted by up to $n$ parts $p_{{\text{avoid}}_k}$ where we hit a string of connectors going up to another rectangle $r'_k$ and we have to follow it, go around $r'_k$ and come down again. The last part $p_{{\text{up}}_{i+1}}$ is going up from the bottom gap to the position of $r'_{i+1}$. We will now show that each of these parts has a maximum length of $4(H' + 2nW')$.

 The parts $p_{{\text{up}}_{i+1}}$ and $p_{{\text{down}}_i}$ have to span the height $H'$ at most once, and may encounter all other rectangles $r'_k$ at most once. Going around any such $r'_k$ means at most traversing its width twice, which is at most $2W'$. Hence each of $p_{{\text{up}}_{i+1}}$ and $p_{{\text{down}}_i}$ has a total length of at most $H' + 2nW' < 4(H' + 2nW')$. Since every $p_{{\text{avoid}}_k}$ exactly follows the $p_{{\text{up}}_k}$, then surrounds $r'_k$ (which has maximum width $W'$ and maximum height $H'$) and then follows $p_{{\text{down}}_k}$, it has a maximum length of $2(W'+H') + 2(H' + 2nW') \leq 4(H' + 2nW')$. Finally, $p_{\text{circle}}$ has a maximum length of $2H' + 2W' \leq 4(H' + 2nW')$.

 Thus, the total length of the path of connectors comprised of $n + 3$ parts of at most length $4(H' + 2nW')$ each is at most $4(n+3)(H' + 2nW')$. Equation~\eqref{eqn:connector-length} ensures that our string of connectors has sufficient length.

 The second condition is covered by equation~\eqref{eqn:connector-space}. Consider Fig.~\ref{fig:hardness-path-folding}. If a string of connectors just passes through a gap, it takes up exactly $1 \times x$ space. If it folds $m$ connector rectangles inside the gap, it takes $m \times x^2$ plus the 'wasted' space (the red shaded space in Fig.~\ref{fig:hardness-path-folding}). The wasted space can be at most $1 \times 2x$, and since every string of connectors has $k$ connector rectangles, the space taken up by those can be at most $kx^2$, thus every string of connectors can take at most $kx^2 + 2x$ space in any given gap. Since there are $n$ such strings of connectors and every gap has dimensions $1 \times d$, equation~\eqref{eqn:connector-space} ensures that the space in every gap is sufficient.



We showed that we can find a layout of the path that corresponds to the optimum packing of the rectangles, if such a packing exists within the desired bounding box. Thus, finding the most space-efficient layout for a path of rectangles is \NP-hard.
\end{proof}

\paragraph{Implementation Details} Here we provide some details regarding the implementation of the algorithms \textsc{Planar} and \textsc{CPDWCV} from Section~\ref{sec:experimental}.

Before the algorithms are applied, the text is preprocessed using this workflow: The text is split into sentences, and the sentences are split into words using Apache OpenNLP. We then remove stop words, perform stemming on the words and group the words with the same stem. The similarity of words is computed using Latent Semantic Analysis based on the co-occurrence of the words within the same sentence.

In the implementation of \textsc{Planar}, we use the $(\frac{\beta}{\beta + 1} - \epsilon)$-approximation from~\cite{Fleischer2011} combined with a FPTAS for \textsc{Knapsack} to approximate the stars.
In the implementation of \textsc{CPDWCV}, we achieved the best results in our experiments with 
parameters $K_r = 4000$ and $K_a = 25$. One of the results computed by our algorithm is given in Fig.~\ref{fig:wordle}.

\begin{figure}[t]
 \centering
 \includegraphics[width=.7\textwidth]{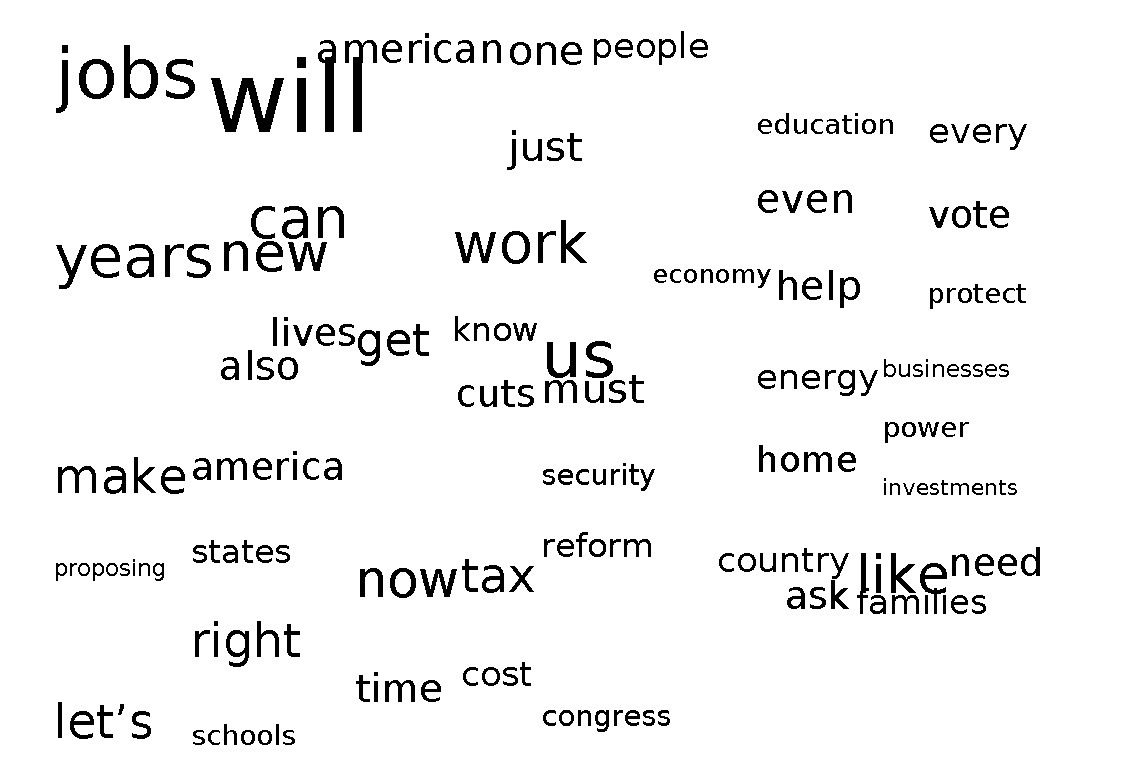}
 \caption{A result of the \textsc{Planar} algorithm: Star-Based semantic preserving visualization of Obama's 2013
State of the Union Speech.}
 \label{fig:wordle}
\end{figure}

\end{document}